\newtheorem{theorem}{Theorem}[section]
\newtheorem{claim}{Claim}
\begin{document}

\onehalfspace
%\linenumbers

\title{Dynamic monopolies for interval graphs with bounded thresholds}

\author{St\'{e}phane Bessy$^1$ 
\and Stefan Ehard$^2$
\and Lucia D. Penso$^2$
\and Dieter Rautenbach$^2$}

\date{}

\maketitle

\begin{center}
$^1$ Laboratoire d'Informatique, de Robotique et de Micro\'{e}lectronique de Montpellier,\\
Montpellier, France, \texttt{stephane.bessy@lirmm.fr}\\[3mm]
$^2$ Institut f\"{u}r Optimierung und Operations Research, 
Universit\"{a}t Ulm, Ulm, Germany,
\{\texttt{stefan.ehard,lucia.penso,dieter.rautenbach}\}\texttt{@uni-ulm.de}\\[3mm]
\end{center}

\begin{abstract}
For a graph $G$ and an integer-valued threshold function $\tau$ on its
vertex set, a dynamic monopoly is a set of vertices of $G$ such that
iteratively adding to it vertices $u$ of $G$ that have at least
$\tau(u)$ neighbors in it eventually yields the vertex set of $G$.  We
show that the problem of finding a dynamic monopoly of minimum order
can be solved in polynomial time for interval graphs with bounded
threshold functions, but is NP-hard for chordal graphs allowing
unbounded threshold functions.
\end{abstract}

{\small 
\begin{tabular}{lp{13cm}}
{\bf Keywords:} Dynamic monopoly; target set selection; chordal graph;
interval graph
\end{tabular}
}

%\pagebreak

\section{Introduction}

Dynamic monopolies are a simple model for various types of viral
processes in networks \cite{dori,drro,keklta}. Let $G$ be a finite,
simple, and undirected graph.  A {\it threshold function} for $G$ is
an integer-valued function whose domain contains the vertex set $V(G)$
of $G$.  Let $\tau$ be a threshold function for $G$.  For a set $D$ of
vertices of $G$, the {\it hull $H_{(G,\tau)}(D)$ of $D$ in $(G,\tau)$}
is the set obtained by starting with the empty set, and iteratively
adding vertices $u$ to the current set that belong to $D$ or have at
least $\tau(u)$ neighbors in the current set as long as possible.  The
set $D$ is a {\it dynamic monopoly} or a {\it target set} of
$(G,\tau)$ if $H_{(G,\tau)}(D)$ equals $V(G)$, and the minimum order
of a dynamic monopoly of $(G,\tau)$ is denoted by ${\rm dyn}(G,\tau)$.

The parameter ${\rm dyn}(G,\tau)$ is computationally hard even when
restricted to instances with bounded threshold functions
\cite{ch,cedoperasz,drro,kylivy}.  Efficient algorithms that work for
unbounded threshold functions are known for trees
\cite{ch,cedoperasz,drro}, block-cactus graphs \cite{chhuliwuye},
graphs of bounded treewidth \cite{behelone}, and graphs whose blocks
have bounded order \cite{cedoperasz}.  For bounded threshold
functions, some more instances become tractable, and ${\rm
  dyn}(G,\tau)$ can be computed efficiently if $G$ is cubic and
$\tau=2$ \cite{barasasz,kylivy} or if $G$ is chordal and $\tau\leq 2$
\cite{cedoperasz,chhuliwuye}.  The latter result relies on the case
$t=2$ of the following theorem.

\begin{theorem}[Chiang et al. \cite{chhuliwuye}]\label{theoremchordal}
If $t$ is a non-negative integer, $G$ is a $t$-connected chordal
graph, and $\tau$ is a threshold function for $G$ with $\tau(u)\leq t$
for every vertex $u$ of $G$, then ${\rm dyn}(G,\tau)\leq t$.
\end{theorem}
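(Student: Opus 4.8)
My plan is to prove Theorem~\ref{theoremchordal} by induction on $n=|V(G)|$, using the case $t=0$ and the case where $G$ is complete as base cases, and peeling off a whole batch of simplicial vertices in the inductive step so that $t$-connectivity survives the deletion.

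If $t=0$, then every vertex $u$ satisfies $\tau(u)\le 0$, so the empty set is already a dynamic monopoly (each vertex may be added having $0\ge\tau(u)$ active neighbours), giving ${\rm dyn}(G,\tau)=0$. If $G$ is complete, then $t$-connectivity forces $n\ge t+1$, and any set $D$ of exactly $t$ vertices is a dynamic monopoly, because every remaining vertex is adjacent to all $t$ vertices of $D$ and $\tau\le t$. So I may assume $t\ge 1$ and that $G$ is not complete.

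For the inductive step I would pick a simplicial vertex $v$ of $G$ (it exists since $G$ is chordal), set $C=N_G[v]$ (the closed neighbourhood of $v$, a clique since $v$ is simplicial), and let $W=\{u\in V(G):N_G[u]=C\}$ and $S=C\setminus W$. Then $v\in W\ne\emptyset$, every vertex of $W$ is simplicial with all of its neighbours inside $C$, and --- since $G$ is connected and not complete --- the set $V(G)\setminus C$ is nonempty and is separated from $W$ by $S$, so $t$-connectivity of $G$ forces $|S|\ge t$. Now set $G'=G-W$. Then $G'$ is chordal and has fewer vertices than $G$, and I claim $G'$ is still $t$-connected: if some $T\subseteq V(G')$ with $|T|<t$ disconnected $G'$, then $S\setminus T$ is a nonempty clique lying in a single component of $G'-T$, and since every vertex of $W$ is adjacent to all of $S\setminus T$, adding $W$ back places all of $W$ into that one component, so $T$ would already disconnect $G$ --- contradicting its $t$-connectivity. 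Hence by the induction hypothesis $(G',\tau|_{V(G')})$ has a dynamic monopoly $D'$ with $|D'|\le t$.

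Finally I would argue that $D'$ is a dynamic monopoly of $(G,\tau)$ as well. Since $G[V(G')]=G'$, running the activation process of $(G',\tau|_{V(G')})$ starting from $D'$ inside $G$ is legitimate (neighbours outside $V(G')$ only ever help), so $H_{(G,\tau)}(D')\supseteq V(G')$; in particular all of $S$ is activated. Then each $w\in W$ has at least the $\ge t$ vertices of $S$ among its active neighbours while $\tau(w)\le t$, so $w$ is added, whence $H_{(G,\tau)}(D')=V(G)$ and ${\rm dyn}(G,\tau)\le|D'|\le t$. The one point that needs genuine care is the preservation of $t$-connectivity under deletion of the batch $W$: deleting a single simplicial vertex need not keep the graph $t$-connected, and the argument works precisely because every vertex of $W$ reaches the rest of $G$ only through the separator $S$, whose size $t$-connectivity forces to be at least $t$.
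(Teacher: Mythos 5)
The paper does not prove Theorem~\ref{theoremchordal} at all: it is quoted from Chiang et al.~\cite{chhuliwuye} and used as a black box, so there is no in-paper argument to compare yours against. Judged on its own, your proof is correct. The base cases are fine ($t=0$ gives the empty monopoly; a complete $t$-connected graph has at least $t+1$ vertices, and any $t$ of them dominate the rest since $\tau\le t$). In the inductive step, the clique $C=N_G[v]$ around a simplicial vertex, the batch $W=\{u:N_G[u]=C\}$, and the separator $S=C\setminus W$ are set up so that $S$ genuinely separates $W$ from the nonempty set $V(G)\setminus C$, forcing $|S|\ge t$; your connectivity-preservation argument for $G-W$ is sound because $S\setminus T$ is a clique sitting in one component of $G'-T$ and every vertex of $W$ attaches only to $W\cup S$, so $W$ cannot bridge distinct components (and $|V(G')|\ge|S|+1\ge t+1$, so $t$-connectivity of $G'$ is even well-posed). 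The final step uses the standard monotonicity that the hull computed in an induced subgraph is contained in the hull computed in the whole graph, and then each $w\in W$ sees all $|S|\ge t\ge\tau(w)$ activated vertices of $S$. You correctly identify the one delicate point --- that deleting a single simplicial vertex need not preserve $t$-connectivity, whereas deleting the whole true-twin class $W$ does --- and you resolve it. This gives the paper's readers a self-contained proof of the imported theorem, which the paper itself does not supply.
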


Since this result holds for arbitrary $t$, it suggests that there
might be an efficient algorithm for chordal graphs and bounded
threshold functions.  In the present paper we show that this is at
least true for interval graphs, which form a prominent subclass of
chordal graphs.

\begin{theorem}\label{theorem2}
Let $t$ be a non-negative integer.  For a given interval graph $G$,
and a given threshold function $\tau$ for $G$ with $\tau(u)\leq t$ for
every vertex $u$ of $G$, the value of ${\rm dyn}(G,\tau)$ can be
determined in polynomial time.
\end{theorem}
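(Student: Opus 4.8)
The plan is to reduce the problem to an ordering problem, to exploit the linear (clique-path) structure of interval graphs, and to run a dynamic program whose state space stays polynomial precisely because $\tau$ is bounded. I would start from two standard facts. First, $D$ is a dynamic monopoly of $(G,\tau)$ if and only if there is an ordering $v_1,\dots,v_n$ of $V(G)$ with $D=\{v_j : |N(v_j)\cap\{v_1,\dots,v_{j-1}\}|<\tau(v_j)\}$; hence ${\rm dyn}(G,\tau)$ equals the minimum, over all orderings of $V(G)$, of the number of \emph{deficient} vertices, a vertex $v_j$ being deficient if it has fewer than $\tau(v_j)$ neighbours among its predecessors. Second, an interval graph admits, in polynomial (indeed linear) time, an ordering $C_1,\dots,C_m$ of its maximal cliques with $m\le n$ such that, for every vertex $v$, the set of indices $j$ with $v\in C_j$ is an interval $[a_v,b_v]$; consequently $N[v]=C_{a_v}\cup\dots\cup C_{b_v}$, the set $S_i:=C_i\cap C_{i+1}$ is a clique separating $C_1\cup\dots\cup C_i$ from $C_{i+1}\cup\dots\cup C_m$, and $v\in S_i$ holds exactly when $a_v\le i<b_v$. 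So it suffices to compute an ordering minimising the number of deficient vertices by a left-to-right dynamic program over $C_1,\dots,C_m$.

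The engine is a cascade observation, which is the phenomenon behind Theorem~\ref{theoremchordal} restricted to a single clique: if $C$ is a clique, $\tau(u)\le t$ for all $u\in C$, and $|C|\ge t+1$, then as soon as $t$ vertices of $C$ are active, all remaining vertices of $C$ become active; equivalently, in an ordering, once the $t$-th vertex of $C$ has appeared, every later vertex of $C$ is non-deficient. In particular, when $S_i$ lies in a clique $C_i$ with $|C_i|>t$, the whole ``unbounded'' part of the process near $S_i$ is a seed-free cascade triggered by a bounded prefix of $C_i$. Consequently, the only data that must cross the separator $S_i$ is a bounded \emph{interface}: the ordered list of the first $\min\{t,|S_i|\}$ vertices of $S_i$ in the chosen ordering, together with, for each such vertex $s$, the number --- capped at $t$ --- of neighbours of $s$ in $C_1\cup\dots\cup C_i$ that precede $s$, plus a flag recording whether $t$ vertices of $C_i$ have already appeared (so that $C_i$, hence all of $S_i$, has cascaded). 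There are $n^{O(t)}$ interfaces.

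The dynamic program keeps these interfaces as states, storing for each the minimum number of deficient vertices in $C_1\cup\dots\cup C_i$ compatible with it. The transition from a state at $S_{i-1}$ to a state at $S_i$ guesses the ordered list of the first $\min\{t,|C_i|\}$ vertices of $C_i$ to appear --- another $n^{O(t)}$ choices, required to be consistent on $C_i\cap C_{i-1}=S_{i-1}$ with the incoming list, its counts, and its flag. By the cascade observation this fixes the relevant relative order inside $C_i$, so the transition can: finalise the deficiency status of every vertex $v$ with $b_v=i$ (its closed neighbourhood is now entirely seen --- either $v$ lies in the guessed prefix, where its accumulated support is compared with $\tau(v)$, or $v$ is cascaded and non-deficient); update the support counts of the vertices of $S_i$ and extract the new interface; and add the newly deficient vertices to the running count. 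Reading off the best state after $C_m$ yields ${\rm dyn}(G,\tau)$, and since every vertex is eventually either non-deficient or counted, the corresponding seed set is automatically a dynamic monopoly.

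I expect the main obstacle to be the correctness proof, i.e.\ showing that the bounded interface faithfully summarises all orderings minimising the number of deficient vertices. Two points need care. The propagation across $S_i$ is two-way --- a vertex of $S_i$ may be activated by its right-hand neighbours and then trigger a cascade back into $C_i$ --- so each support count has to be split into a left part and a right part, and the two sides of the decomposition must agree on the split. And the deficiency of a vertex $v$ depends on its whole clique interval $[a_v,b_v]$, so one has to argue that, apart from the $O(t)$ vertices tracked at each separator, every vertex is non-deficient by the cascade observation and therefore never needs to be carried forward, and that the partial orders chosen at consecutive cliques glue into a single global ordering realising everything. Granting these structural facts, the running time is $n^{O(t)}$, which is polynomial for each fixed $t$, as required.
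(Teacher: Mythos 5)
Your overall strategy --- recast ${\rm dyn}(G,\tau)$ as minimizing the number of \emph{deficient} vertices over orderings of $V(G)$, then run a left-to-right dynamic program along the linear structure of the interval graph with a bounded interface recording an order on a separator together with ``help'' counts --- is the same as the paper's. The mechanism that keeps the interface bounded is genuinely different, though. The paper cuts only at separators of size less than $t$ that are local minima of the clique-size profile; between consecutive cuts the piece is either a small clique or $t$-connected, so Theorem~\ref{theoremchordal} forces a minimum dynamic monopoly to meet each piece in at most $t$ vertices, and each piece is searched exhaustively. You cut at every clique separator $S_i$, which may be large, and keep only the first $t$ vertices of $S_i$ in the ordering, relying on the single-clique cascade observation to conclude that every untracked vertex is automatically non-deficient. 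That observation is correct, and your route, if completed, would avoid Theorem~\ref{theoremchordal} entirely, which is an honest simplification of the toolbox.

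There is, however, a concrete gap in the transition. You claim that after guessing the ordered prefix of the first $\min\{t,|C_{i+1}|\}$ vertices of $C_{i+1}$ one can ``extract the new interface'' at $S_{i+1}$. This fails as stated: the first $t$ vertices of $S_{i+1}$ in the global ordering need not lie among the first $t$ vertices of $C_{i+1}$, since that prefix may be filled by vertices of $C_{i+1}\setminus S_{i+1}$. Such a vertex $u\in S_{i+1}$ is indeed non-deficient (at least $t$ of its neighbours in the clique $C_{i+1}$ precede it), but its position still matters to the right: $u$ may be among the first $t$ vertices of $C_{i+2}$ and hence a predecessor of a tracked, potentially deficient vertex there. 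So the new interface must itself be guessed and checked for consistency, not read off. This is repairable within $n^{O(t)}$, but it is precisely the kind of step your ``granting these structural facts'' elides; the same applies to the normalization that cascaded vertices can be placed as early as possible, and to the lemma that the locally consistent partial orders glue into one global ordering (the paper proves the analogous statement by concatenating cascades in the proof of Claim~\ref{claim4}). The flag in your interface is also underspecified, since ``already appeared'' is not anchored to any position in the ordering. As it stands, the proposal is a viable plan whose hardest part --- the correctness of the bounded interface --- is still missing.
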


It is open \cite{chniniwe} whether ${\rm dyn}(G,\tau)$ is fixed
parameter tractable for instances with bounded threshold functions
when parameterized by the distance to interval graphs. Note that
Theorem~\ref{theorem2} would be a consequence of such a fixed
parameter tractability.

As our second result we show that dynamic monopolies remain hard for
chordal graphs with unbounded threshold functions.

\begin{theorem}\label{theorem1}
For a given triple $(G,\tau,k)$, where $G$ is a chordal graph, $\tau$
is a threshold function for $G$, and $k$ is a positive integer, it is
NP-complete to decide whether ${\rm dyn}(G,\tau)\leq k$.
\end{theorem}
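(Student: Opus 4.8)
\emph{Proof idea.}
Membership in NP is immediate: given a set $D$ with $|D|\le k$, the hull $H_{(G,\tau)}(D)$ can be computed in polynomial time by simulating the at most $|V(G)|$ activation rounds, and one just checks whether it equals $V(G)$.

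For the hardness I would reduce from \textsc{Vertex Cover}. So let $H$ be a graph with $V(H)=\{v_1,\dots,v_n\}$, which we may assume has at least one edge and no isolated vertex, and let $k$ be an integer. Put $N:=n$ and build a graph $G$ as follows: its vertex set is $V(H)$ together with, for every edge $e\in E(H)$, a set $W_e=\{w_e^{(1)},\dots,w_e^{(N)}\}$ of $N$ fresh vertices; the vertices of $V(H)$ form a clique, each $w_e^{(\ell)}$ is made adjacent precisely to the two endpoints of $e$, and $\bigcup_{e}W_e$ is an independent set. Then $G$ is a split graph, hence chordal, and has polynomial size. The threshold function is $\tau(w_e^{(\ell)}):=1$ for all $e$ and $\ell$, and $\tau(v):=N\cdot\deg_H(v)+1$ for every $v\in V(H)$; in particular all thresholds are at most $n^2$. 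I claim that ${\rm dyn}(G,\tau)\le k$ if and only if $H$ has a vertex cover of order at most $k$, which makes this a valid polynomial reduction.

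For the forward direction, given a vertex cover $C$ of $H$ with $|C|\le k$ one checks that $C$ is a dynamic monopoly of $(G,\tau)$: $C$ is non-empty and meets every edge, so after one round all vertices $w_e^{(\ell)}$ are active (threshold $1$), and after one more round every $v\in V(H)\setminus C$ sees all of its $N\deg_H(v)$ neighbours in $\bigcup_e W_e$ together with at least one active clique-neighbour in $C$, which meets its threshold $N\deg_H(v)+1$; hence everything gets activated. For the backward direction, take a dynamic monopoly $D$ with $|D|\le k$. First I would \emph{normalise} $D$: whenever some $w_e^{(\ell)}\in D$, replacing it by an endpoint $u$ of $e$ again gives a dynamic monopoly of no larger order, because $u$ then re-activates $w_e^{(\ell)}$ (threshold $1$), so that the original $D$ lies in the hull of the new set, and the obvious monotonicity and idempotence of the hull operator finish the argument. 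Iterating yields a dynamic monopoly $D''\subseteq V(H)$ with $|D''|\le k$. Then I would show $D''$ is a vertex cover by a \emph{deadlock} argument: if some edge $e_0=u_0v_0$ had $u_0,v_0\notin D''$, then no $w_{e_0}^{(\ell)}$ can be activated before $u_0$ or $v_0$ is, while — since $\deg_G(u_0)-|W_{e_0}|=(n-1)+N\deg_H(u_0)-N<\tau(u_0)$, which holds because $N\ge n-1$, and similarly for $v_0$ — neither $u_0$ nor $v_0$ can be activated before some $w_{e_0}^{(\ell)}$ is. Looking at the first round in which any vertex of $\{u_0,v_0\}\cup W_{e_0}$ would become active gives a contradiction, so none of these vertices is ever activated, contradicting $H_{(G,\tau)}(D'')=V(G)$. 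Hence $D''$, of order at most $k$, is a vertex cover of $H$.

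The crux — and the reason for blowing each edge-vertex up by the factor $N$ — is the deadlock step of the backward direction: because $V(H)$ is a clique, a cascade inside this clique could otherwise activate all of $G$ even when $D''$ misses both endpoints of some edge (already the naive variant with a single vertex per edge fails, e.g.\ for $H=K_4$). The choice $N\ge n-1$ together with $\tau(v)=N\deg_H(v)+1$ is exactly what makes every clique vertex genuinely dependent on its private $W_e$-gadgets while keeping the forward direction intact; the normalisation step is routine but is what allows us to restrict attention to seed sets contained in $V(H)$.
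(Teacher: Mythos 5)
Your proposal is correct and follows essentially the same route as the paper: a reduction from \textsc{Vertex Cover} in which $V(H)$ becomes a clique, each edge is blown up into $n$ threshold-$1$ vertices attached to its two endpoints, and the thresholds on the clique vertices are tuned so that no clique vertex can fire while one of its incident edge-gadgets is entirely inactive, yielding the same normalisation-plus-deadlock argument for the backward direction. The only cosmetic differences are that the paper reduces from \textsc{Vertex Cover} on cubic graphs (permitting the uniform threshold $3n+3$) and makes each edge-gadget a clique rather than an independent set.
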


\section{Proofs}

Our approach to prove Theorem \ref{theorem2} is to construct a
sequence $G_1\subseteq G_2\subseteq \ldots \subseteq G_k$ of subgraphs
of $G$ in such a way that $G_k=G$, and Theorem \ref{theoremchordal}
implies that every minimum dynamic monopoly $D$ for $(G,\tau)$
intersects a suitable supergraph $\partial G_i$ of each
$G_i-V(G_{i-1})$ in at most $t$ vertices.  This enables us to apply
dynamic programming efficiently calculating partial information for
each $G_i$ by emulating the formation of the hull of $D$ within
$\partial G_i$, and exploit previously computed information for
$G_{i-1}$.  A notion that is useful in this context is the one of a
{\it cascade} for a dynamic monopoly $D$ of $(G,\tau)$, defined as a
linear order $u_1\prec\ldots\prec u_n$ of the vertices of $G$ such
that, for every $i$ in $[n]$, either $u_i\in D$ or $u_i\not\in D$ and
$|N_G(u_i)\cap \{ u_j:j\in [i-1]\}|\geq \tau (u_j)$, where $[k]$
denotes the set of positive integers that are less than or equal to
some integer $k$.  A cascade encodes the order in which the vertices
of $G$ can be added to the hull of $D$ starting with the empty set.
Clearly, every dynamic monopoly admits at least one cascade $\prec$.
Furthermore, we may assume that $u\prec v$ for every $u\in D$ and
every $v\in V(G)\setminus D$.

We proceed to the proof of our first result.

\begin{proof}[Proof of Theorem \ref{theorem2}]
Let $t$, $G$, and $\tau$ be as in the statement.  Clearly, we may
assume that $G$ is connected.  Let $n$ be the order of $G$.  In linear
time \cite{bolu}, we can determine an interval representation $(I(u))_{u\in V(G)}$ of $G$, that is, two distinct vertices $u$
and $v$ of $G$ are adjacent if and only if the intervals $I(u)$ and
$I(v)$ intersect.  By applying well-known manipulations, we may assume
that each interval $I(u)$ is closed, and that the $2n$ endpoints of
the $n$ intervals are all distinct.

Let $x_1<x_2<\ldots<x_{2n}$ be the endpoints of the intervals.  For
each $i\in [2n-1]$, let $C_i$ be the set of vertices $u$ of $G$ with
$I_i:=[x_i,x_{i+1}]\subseteq I(u)$, and let $c_i=|C_i|$.  Since each
$x_i$ is either the right endpoint of exactly one interval or the left
endpoint of exactly one interval, we have $|c_{i+1}-c_i|=1$ for every
$i\in [2n-1]$.

Our first claim states a folklore property of interval graphs; we
include a proof for the sake of completeness.

\begin{claim}\label{claim1}
If $C$ is a minimal vertex cut of $G$, then $C=C_i$ for some $i\in
[2n-2]\setminus \{ 1\}$ with $c_i<\min\{ c_{i-1},c_{i+1}\}$.
\end{claim}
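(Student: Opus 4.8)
The plan is to exploit the geometry of the interval representation together with the defining property of an inclusion-minimal cut. First I would record two folklore facts. If a set $A$ of vertices induces a connected subgraph of $G$, then $J(A):=\bigcup_{u\in A}I(u)$ is a closed interval, say $[l_A,r_A]$, because were $J(A)$ disconnected it would be separated by a point lying in no $I(u)$, $u\in A$, and then the intersection graph of $\{I(u):u\in A\}$ would be disconnected too. And for two distinct components $A$ and $B$ of $G-C$ the intervals $J(A)$ and $J(B)$ are disjoint, since a common point would lie in an interval of $A$ and in an interval of $B$ and hence produce an edge between $A$ and $B$. Denoting the components of $G-C$ by $A_1,\dots,A_k$ with $k\ge 2$, these facts let me assume that $J(A_1)$ is the leftmost of the pairwise disjoint intervals $J(A_j)$, so that $r_1<l_j$ for all $j\ge 2$, and, after relabelling, that $l_2=\min\{l_j:j\ge 2\}$; note $r_1<l_2\le l_k$.

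The crucial step is to show that every vertex of $C$ is adjacent to $A_1$ and to $A_k$. Let $C'=\{v\in C:v\text{ has a neighbour in }A_1\}$. Then in $G-C'$ no vertex of $A_1$ has a neighbour outside $A_1$ (its neighbours in $C$ lie in $C'$, and it has none in the other $A_j$), so $A_1$ is a union of components of $G-C'$ while the nonempty set $A_2\cup\dots\cup A_k$ lies in the remaining ones; hence $C'$ is a cut, and inclusion-minimality of $C$ forces $C'=C$. The symmetric argument with $A_k$ shows every vertex of $C$ is adjacent to $A_k$ as well. Consequently, for each $v\in C$ the interval $I(v)$ meets both $[l_1,r_1]$ and $[l_k,r_k]$, so, as $r_1<l_k$, it contains $[r_1,l_k]$ and in particular $[r_1,l_2]$.

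Next I would identify the index. Choose $u_0\in A_1$ with $r_1$ the right endpoint of $I(u_0)$ and $u_1\in A_2$ with $l_2$ the left endpoint of $I(u_1)$, and write $r_1=x_p$. No interval endpoint lies strictly between $r_1$ and $l_2$: endpoints of intervals of vertices in $A_1$ lie in $[l_1,r_1]$, those of vertices in $A_j$ with $j\ge 2$ lie in $[l_j,r_j]\subseteq[l_2,\infty)$, and for $v\in C$ we have seen $[r_1,l_2]\subseteq I(v)$, so neither of its endpoints is strictly between. As $l_2$ is an endpoint with $l_2>x_p$, it follows that $x_{p+1}=l_2$, so $I_p=[r_1,l_2]$. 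Then $C=C_p$: the inclusion $C\subseteq C_p$ is exactly $[r_1,l_2]\subseteq I(v)$ for $v\in C$, while any $u\in C_p$ satisfies $[r_1,l_2]\subseteq I(u)$, which is incompatible with $I(u)\subseteq J(A_j)=[l_j,r_j]$ for any $j$ (for $j=1$ since $l_2>r_1$, for $j\ge 2$ since $r_1<l_2\le l_j$), so $u\notin G-C$. Finally, since $I(u_0)$ has a left endpoint strictly below $x_p$ and $I(u_1)$ a right endpoint strictly above $x_{p+1}$, both $p-1$ and $p+1$ lie in $[2n-1]$, so $2\le p\le 2n-2$; and crossing the right endpoint $x_p$ of $I(u_0)$ gives $C_{p-1}=C_p\cup\{u_0\}$ while crossing the left endpoint $x_{p+1}$ of $I(u_1)$ gives $C_{p+1}=C_p\cup\{u_1\}$, so with $i:=p$ we get $c_i=c_{i-1}-1=c_{i+1}-1<\min\{c_{i-1},c_{i+1}\}$ and $i\in[2n-2]\setminus\{1\}$, as claimed. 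The main obstacle, I expect, is the crucial step above — deducing from inclusion-minimality that every vertex of $C$ sees both extreme components; everything else is bookkeeping with the fixed interval representation.
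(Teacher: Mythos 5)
Your proof is correct, and it takes a genuinely different route from the paper's. The paper fixes $u,v$ in distinct components of $G-C$ with $r(u)<\ell(v)$, argues that some column $C_i$ with $I_i\subseteq[r(u),\ell(v)]$ must be entirely contained in $C$ (else a $u$--$v$ path survives in $G-C$), and then uses minimality twice: once to upgrade $C_i\subseteq C$ to equality, and again to rule out $c_i>c_{i-1}$ when $i>i_1$. Your key lemma is instead that every vertex of an inclusion-minimal cut has a neighbour in every component of $G-C$; together with the disjointness of the interval hulls $J(A_j)$ of the components, this forces every $I(v)$ with $v\in C$ to span the gap $[r_1,l_2]$ between the two leftmost components, and your observation that no endpoint lies strictly inside that gap identifies $C$ as exactly the column $C_p$ with $I_p=[r_1,l_2]$. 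What your version buys is precision: it locates the index explicitly as the gap between consecutive components, yields the sharper conclusion $c_{i-1}=c_{i+1}=c_i+1$, and makes explicit the path-threading step that the paper leaves implicit; the paper's version is shorter because an existence argument suffices and minimality can be recycled for the count conditions. One small stylistic point: invoking $A_k$ for the right-hand anchor is an unnecessary detour, since your adjacency lemma applied to $A_2$ already gives $[r_1,l_2]\subseteq I(v)$ directly.
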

\begin{proof}[Proof of Claim \ref{claim1}]
Clearly, if $i\in [2n-2]\setminus \{ 1\}$ is such that $c_i<\min\{
c_{i-1},c_{i+1}\}$, then $C_i$ is a minimal vertex cut separating the
unique vertex in $C_{i-1}\setminus C_i$ from the unique vertex in
$C_{i+1}\setminus C_i$.  Conversely, let $C$ be a minimal vertex cut
of $G$.  Let $u$ and $v$ be vertices in distinct components of $G-C$.
We may assume that the right endpoint $r(u)$ of $I(u)$ is less than
the left endpoint $\ell(v)$ of $I(v)$.  There are indices $i_1$ and
$i_2$ such that $[r(u),\ell(v)]=\bigcup\limits_{j=i_1}^{i_2}I_j$.
Since $G-C$ contains no path between $u$ and $v$, there is some index
$i$ with $i_1\leq i\leq i_2$ and $C_i\subseteq C$.  Since $G-C_i$
contains no path between $u$ and $v$, the minimality of $C$ implies
$C\subseteq C_i$, and, hence, $C=C_i$.  If $i=i_1$, then
$c_i<c_{i-1}$, because $I(u)$ ends in $i_1$.  If $i>i_1$ and
$c_i>c_{i-1}$, then $C_{i-1}$ is a proper subset of $C_i$, and also
$G-C_{i-1}$ contains no path between $u$ and $v$, contradicting the
minimality of $C$.  Therefore, $c_i<c_{i-1}$, and, by symmetry, also
$c_i<c_{i+1}$.
\end{proof}

Let $j_1<j_2<\ldots<j_{k-1}$ be the indices $i$ in $[2n-1]\setminus \{
1\}$ with $c_i<\min\{ c_{i-1},c_{i+1},t\}$, and let $j_k=2n-1$.  For
$i\in [k]$, let $G_i$ be the subgraph of $G$ induced by
$V_i:=C_1\cup\cdots\cup C_{j_i}$, and let $B_i=C_{j_i}$.  Note that
$B_i$ contains all vertices in $V_i$ that have a neighbor in
$V(G)\setminus V_i$, and that $|B_i|<t$.  Let $\partial V_1=V_1$, and,
for $i\in [k]\setminus \{ 1\}$, let $\partial V_i=(V_i\setminus
V_{i-1})\cup B_{i-1}$.  For $i\in [k]$, let $\partial G_i$ be the
subgraph of $G$ induced by $\partial V_i$, cf. Figure~\ref{fig:int-rep}.

\begin{figure}[t]
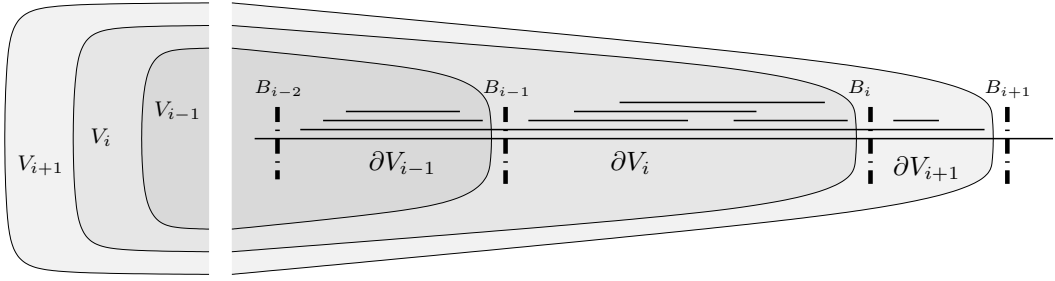

\begin{center}
\ifx\XFigwidth\undefined\dimen1=0pt\else\dimen1\XFigwidth\fi
\divide\dimen1 by 10374
\ifx\XFigheight\undefined\dimen3=0pt\else\dimen3\XFigheight\fi
\divide\dimen3 by 2724
\ifdim\dimen1=0pt\ifdim\dimen3=0pt\dimen1=2486sp\dimen3\dimen1
  \else\dimen1\dimen3\fi\else\ifdim\dimen3=0pt\dimen3\dimen1\fi\fi
\tikzpicture[x=+\dimen1, y=+\dimen3]
{\ifx\XFigu\undefined\catcode`\@11
\def\temp{\alloc@1\dimen\dimendef\insc@unt}\temp\XFigu\catcode`\@12\fi}
\XFigu2486sp
% Uncomment to scale line thicknesses with the same
% factor as width of the drawing.
%\pgfextractx\XFigu{\pgfqpointxy{1}{1}}
\ifdim\XFigu<0pt\XFigu-\XFigu\fi
\definecolor{xfigc32}{rgb}{0.612,0.000,0.000}
\definecolor{xfigc33}{rgb}{0.549,0.549,0.549}
\definecolor{xfigc34}{rgb}{0.549,0.549,0.549}
\definecolor{xfigc35}{rgb}{0.259,0.259,0.259}
\definecolor{xfigc36}{rgb}{0.549,0.549,0.549}
\definecolor{xfigc37}{rgb}{0.259,0.259,0.259}
\definecolor{xfigc38}{rgb}{0.549,0.549,0.549}
\definecolor{xfigc39}{rgb}{0.259,0.259,0.259}
\definecolor{xfigc40}{rgb}{0.549,0.549,0.549}
\definecolor{xfigc41}{rgb}{0.259,0.259,0.259}
\definecolor{xfigc42}{rgb}{0.549,0.549,0.549}
\definecolor{xfigc43}{rgb}{0.259,0.259,0.259}
\definecolor{xfigc44}{rgb}{0.557,0.557,0.557}
\definecolor{xfigc45}{rgb}{0.761,0.761,0.761}
\definecolor{xfigc46}{rgb}{0.431,0.431,0.431}
\definecolor{xfigc47}{rgb}{0.267,0.267,0.267}
\definecolor{xfigc48}{rgb}{0.557,0.561,0.557}
\definecolor{xfigc49}{rgb}{0.443,0.443,0.443}
\definecolor{xfigc50}{rgb}{0.682,0.682,0.682}
\definecolor{xfigc51}{rgb}{0.200,0.200,0.200}
\definecolor{xfigc52}{rgb}{0.580,0.576,0.584}
\definecolor{xfigc53}{rgb}{0.455,0.439,0.459}
\definecolor{xfigc54}{rgb}{0.333,0.333,0.333}
\definecolor{xfigc55}{rgb}{0.702,0.702,0.702}
\definecolor{xfigc56}{rgb}{0.765,0.765,0.765}
\definecolor{xfigc57}{rgb}{0.427,0.427,0.427}
\definecolor{xfigc58}{rgb}{0.271,0.271,0.271}
\definecolor{xfigc59}{rgb}{0.886,0.886,0.933}
\definecolor{xfigc60}{rgb}{0.580,0.580,0.604}
\definecolor{xfigc61}{rgb}{0.859,0.859,0.859}
\definecolor{xfigc62}{rgb}{0.631,0.631,0.718}
\definecolor{xfigc63}{rgb}{0.929,0.929,0.929}
\definecolor{xfigc64}{rgb}{0.878,0.878,0.878}
\definecolor{xfigc65}{rgb}{0.525,0.675,1.000}
\definecolor{xfigc66}{rgb}{0.439,0.439,1.000}
\definecolor{xfigc67}{rgb}{0.776,0.718,0.592}
\definecolor{xfigc68}{rgb}{0.937,0.973,1.000}
\definecolor{xfigc69}{rgb}{0.863,0.796,0.651}
\definecolor{xfigc70}{rgb}{0.251,0.251,0.251}
\definecolor{xfigc71}{rgb}{0.502,0.502,0.502}
\definecolor{xfigc72}{rgb}{0.753,0.753,0.753}
\definecolor{xfigc73}{rgb}{0.667,0.667,0.667}
\definecolor{xfigc74}{rgb}{0.780,0.765,0.780}
\definecolor{xfigc75}{rgb}{0.337,0.318,0.318}
\definecolor{xfigc76}{rgb}{0.843,0.843,0.843}
\definecolor{xfigc77}{rgb}{0.522,0.502,0.490}
\definecolor{xfigc78}{rgb}{0.824,0.824,0.824}
\definecolor{xfigc79}{rgb}{0.227,0.227,0.227}
\definecolor{xfigc80}{rgb}{0.271,0.451,0.667}
\definecolor{xfigc81}{rgb}{0.482,0.475,0.647}
\definecolor{xfigc82}{rgb}{0.451,0.459,0.549}
\definecolor{xfigc83}{rgb}{0.969,0.969,0.969}
\definecolor{xfigc84}{rgb}{0.255,0.271,0.255}
\definecolor{xfigc85}{rgb}{0.388,0.365,0.808}
\definecolor{xfigc86}{rgb}{0.745,0.745,0.745}
\definecolor{xfigc87}{rgb}{0.318,0.318,0.318}
\definecolor{xfigc88}{rgb}{0.906,0.890,0.906}
\definecolor{xfigc89}{rgb}{0.000,0.000,0.286}
\definecolor{xfigc90}{rgb}{0.475,0.475,0.475}
\definecolor{xfigc91}{rgb}{0.188,0.204,0.188}
\definecolor{xfigc92}{rgb}{0.255,0.255,0.255}
\definecolor{xfigc93}{rgb}{0.780,0.714,0.588}
\definecolor{xfigc94}{rgb}{0.867,0.616,0.576}
\definecolor{xfigc95}{rgb}{0.945,0.925,0.878}
\definecolor{xfigc96}{rgb}{0.886,0.784,0.659}
\definecolor{xfigc97}{rgb}{0.882,0.882,0.882}
\definecolor{xfigc98}{rgb}{0.855,0.478,0.102}
\definecolor{xfigc99}{rgb}{0.945,0.894,0.102}
\definecolor{xfigc100}{rgb}{0.533,0.490,0.761}
\definecolor{xfigc101}{rgb}{0.690,0.631,0.576}
\definecolor{xfigc102}{rgb}{0.514,0.486,0.867}
\definecolor{xfigc103}{rgb}{0.839,0.839,0.839}
\definecolor{xfigc104}{rgb}{0.549,0.549,0.647}
\definecolor{xfigc105}{rgb}{0.290,0.290,0.290}
\definecolor{xfigc106}{rgb}{0.549,0.420,0.420}
\definecolor{xfigc107}{rgb}{0.353,0.353,0.353}
\definecolor{xfigc108}{rgb}{0.388,0.388,0.388}
\definecolor{xfigc109}{rgb}{0.718,0.608,0.451}
\definecolor{xfigc110}{rgb}{0.255,0.576,1.000}
\definecolor{xfigc111}{rgb}{0.749,0.439,0.231}
\definecolor{xfigc112}{rgb}{0.859,0.467,0.000}
\definecolor{xfigc113}{rgb}{0.855,0.722,0.000}
\definecolor{xfigc114}{rgb}{0.000,0.392,0.000}
\definecolor{xfigc115}{rgb}{0.353,0.420,0.231}
\definecolor{xfigc116}{rgb}{0.827,0.827,0.827}
\definecolor{xfigc117}{rgb}{0.557,0.557,0.643}
\definecolor{xfigc118}{rgb}{0.953,0.725,0.365}
\definecolor{xfigc119}{rgb}{0.537,0.600,0.420}
\definecolor{xfigc120}{rgb}{0.392,0.392,0.392}
\definecolor{xfigc121}{rgb}{0.718,0.902,1.000}
\definecolor{xfigc122}{rgb}{0.525,0.753,0.925}
\definecolor{xfigc123}{rgb}{0.741,0.741,0.741}
\definecolor{xfigc124}{rgb}{0.827,0.584,0.322}
\definecolor{xfigc125}{rgb}{0.596,0.824,0.996}
\definecolor{xfigc126}{rgb}{0.380,0.380,0.380}
\definecolor{xfigc127}{rgb}{0.682,0.698,0.682}
\definecolor{xfigc128}{rgb}{1.000,0.604,0.000}
\definecolor{xfigc129}{rgb}{0.549,0.612,0.420}
\definecolor{xfigc130}{rgb}{0.969,0.420,0.000}
\definecolor{xfigc131}{rgb}{0.353,0.420,0.224}
\definecolor{xfigc132}{rgb}{0.549,0.612,0.420}
\definecolor{xfigc133}{rgb}{0.549,0.612,0.482}
\definecolor{xfigc134}{rgb}{0.094,0.290,0.094}
\definecolor{xfigc135}{rgb}{0.678,0.678,0.678}
\definecolor{xfigc136}{rgb}{0.969,0.741,0.353}
\definecolor{xfigc137}{rgb}{0.388,0.420,0.612}
\definecolor{xfigc138}{rgb}{0.871,0.000,0.000}
\definecolor{xfigc139}{rgb}{0.678,0.678,0.678}
\definecolor{xfigc140}{rgb}{0.969,0.741,0.353}
\definecolor{xfigc141}{rgb}{0.678,0.678,0.678}
\definecolor{xfigc142}{rgb}{0.969,0.741,0.353}
\definecolor{xfigc143}{rgb}{0.388,0.420,0.612}
\definecolor{xfigc144}{rgb}{0.322,0.420,0.161}
\definecolor{xfigc145}{rgb}{0.580,0.580,0.580}
\definecolor{xfigc146}{rgb}{0.000,0.388,0.000}
\definecolor{xfigc147}{rgb}{0.000,0.388,0.290}
\definecolor{xfigc148}{rgb}{0.482,0.518,0.290}
\definecolor{xfigc149}{rgb}{0.906,0.741,0.482}
\definecolor{xfigc150}{rgb}{0.647,0.710,0.776}
\definecolor{xfigc151}{rgb}{0.420,0.420,0.580}
\definecolor{xfigc152}{rgb}{0.518,0.420,0.420}
\definecolor{xfigc153}{rgb}{0.322,0.612,0.290}
\definecolor{xfigc154}{rgb}{0.839,0.906,0.906}
\definecolor{xfigc155}{rgb}{0.322,0.388,0.388}
\definecolor{xfigc156}{rgb}{0.094,0.420,0.290}
\definecolor{xfigc157}{rgb}{0.612,0.647,0.710}
\definecolor{xfigc158}{rgb}{1.000,0.580,0.000}
\definecolor{xfigc159}{rgb}{1.000,0.580,0.000}
\definecolor{xfigc160}{rgb}{0.000,0.388,0.290}
\definecolor{xfigc161}{rgb}{0.482,0.518,0.290}
\definecolor{xfigc162}{rgb}{0.388,0.451,0.482}
\definecolor{xfigc163}{rgb}{0.906,0.741,0.482}
\definecolor{xfigc164}{rgb}{0.094,0.290,0.094}
\definecolor{xfigc165}{rgb}{0.969,0.741,0.353}
\definecolor{xfigc166}{rgb}{0.000,0.000,0.000}
\definecolor{xfigc167}{rgb}{0.969,0.220,0.161}
\definecolor{xfigc168}{rgb}{0.000,0.000,0.000}
\definecolor{xfigc169}{rgb}{1.000,1.000,0.322}
\definecolor{xfigc170}{rgb}{0.322,0.475,0.290}
\definecolor{xfigc171}{rgb}{0.388,0.604,0.353}
\definecolor{xfigc172}{rgb}{0.776,0.380,0.259}
\definecolor{xfigc173}{rgb}{0.906,0.412,0.259}
\definecolor{xfigc174}{rgb}{1.000,0.475,0.322}
\definecolor{xfigc175}{rgb}{0.871,0.871,0.871}
\definecolor{xfigc176}{rgb}{0.953,0.933,0.827}
\definecolor{xfigc177}{rgb}{0.961,0.682,0.365}
\definecolor{xfigc178}{rgb}{0.584,0.808,0.600}
\definecolor{xfigc179}{rgb}{0.710,0.082,0.490}
\definecolor{xfigc180}{rgb}{0.933,0.933,0.933}
\definecolor{xfigc181}{rgb}{0.518,0.518,0.518}
\definecolor{xfigc182}{rgb}{0.482,0.482,0.482}
\definecolor{xfigc183}{rgb}{0.000,0.353,0.000}
\definecolor{xfigc184}{rgb}{0.906,0.451,0.451}
\definecolor{xfigc185}{rgb}{1.000,0.796,0.192}
\definecolor{xfigc186}{rgb}{0.161,0.475,0.290}
\definecolor{xfigc187}{rgb}{0.871,0.157,0.129}
\definecolor{xfigc188}{rgb}{0.129,0.349,0.776}
\definecolor{xfigc189}{rgb}{0.973,0.973,0.973}
\definecolor{xfigc190}{rgb}{0.902,0.902,0.902}
\definecolor{xfigc191}{rgb}{0.129,0.518,0.353}
\definecolor{xfigc192}{rgb}{0.906,0.906,0.906}
\definecolor{xfigc193}{rgb}{0.443,0.459,0.443}
\definecolor{xfigc194}{rgb}{0.851,0.851,0.851}
\definecolor{xfigc195}{rgb}{0.337,0.620,0.690}
\definecolor{xfigc196}{rgb}{0.788,0.788,0.788}
\definecolor{xfigc197}{rgb}{0.875,0.847,0.875}
\definecolor{xfigc198}{rgb}{0.969,0.953,0.969}
\definecolor{xfigc199}{rgb}{0.800,0.800,0.800}
\clip(-677,-4287) rectangle (9697,-1563);
\tikzset{inner sep=+0pt, outer sep=+0pt}
\pgfsetlinewidth{+7.5\XFigu}
\pgfsetfillcolor{.!5}
\filldraw (1575,-1575)--(1576,-1575)--(1578,-1575)--(1583,-1576)--(1590,-1576)--(1600,-1577)
  --(1615,-1579)--(1633,-1580)--(1656,-1583)--(1685,-1585)--(1719,-1588)--(1758,-1592)
  --(1804,-1596)--(1855,-1601)--(1913,-1607)--(1977,-1613)--(2048,-1619)--(2124,-1626)
  --(2207,-1634)--(2295,-1642)--(2389,-1651)--(2489,-1661)--(2593,-1670)--(2702,-1681)
  --(2815,-1691)--(2932,-1703)--(3052,-1714)--(3175,-1726)--(3301,-1738)--(3428,-1750)
  --(3558,-1762)--(3688,-1775)--(3819,-1787)--(3951,-1800)--(4082,-1812)--(4213,-1825)
  --(4343,-1838)--(4472,-1850)--(4600,-1863)--(4727,-1875)--(4851,-1887)--(4974,-1899)
  --(5094,-1911)--(5212,-1923)--(5328,-1934)--(5441,-1946)--(5551,-1957)--(5659,-1968)
  --(5764,-1978)--(5866,-1989)--(5966,-1999)--(6063,-2009)--(6157,-2019)--(6248,-2028)
  --(6337,-2038)--(6423,-2047)--(6507,-2056)--(6588,-2065)--(6666,-2073)--(6742,-2082)
  --(6816,-2090)--(6887,-2098)--(6956,-2106)--(7023,-2114)--(7088,-2121)--(7151,-2129)
  --(7212,-2136)--(7271,-2144)--(7329,-2151)--(7384,-2158)--(7438,-2165)--(7491,-2172)
  --(7542,-2179)--(7592,-2186)--(7640,-2192)--(7687,-2199)--(7733,-2206)--(7778,-2213)
  --(7866,-2226)--(7950,-2239)--(8031,-2253)--(8107,-2266)--(8180,-2279)--(8250,-2293)
  --(8316,-2306)--(8379,-2320)--(8439,-2333)--(8496,-2347)--(8550,-2360)--(8601,-2374)
  --(8649,-2388)--(8693,-2401)--(8735,-2415)--(8774,-2429)--(8810,-2443)--(8843,-2456)
  --(8873,-2470)--(8901,-2483)--(8926,-2497)--(8949,-2510)--(8969,-2523)--(8986,-2536)
  --(9002,-2549)--(9015,-2562)--(9027,-2574)--(9037,-2586)--(9045,-2599)--(9052,-2610)
  --(9058,-2622)--(9062,-2634)--(9066,-2645)--(9069,-2656)--(9071,-2667)--(9073,-2678)
  --(9074,-2689)--(9075,-2700)--(9077,-2721)--(9079,-2742)--(9080,-2763)--(9082,-2785)
  --(9083,-2807)--(9084,-2830)--(9085,-2853)--(9085,-2877)--(9086,-2901)--(9086,-2925)
  --(9086,-2949)--(9085,-2973)--(9085,-2997)--(9084,-3020)--(9083,-3043)--(9082,-3065)
  --(9080,-3087)--(9079,-3108)--(9077,-3129)--(9075,-3150)--(9074,-3161)--(9073,-3172)
  --(9071,-3183)--(9069,-3194)--(9066,-3205)--(9062,-3216)--(9058,-3228)--(9052,-3240)
  --(9045,-3251)--(9037,-3264)--(9027,-3276)--(9015,-3288)--(9002,-3301)--(8986,-3314)
  --(8969,-3327)--(8949,-3340)--(8926,-3353)--(8901,-3367)--(8873,-3380)--(8843,-3394)
  --(8810,-3407)--(8774,-3421)--(8735,-3435)--(8693,-3449)--(8649,-3462)--(8601,-3476)
  --(8550,-3490)--(8496,-3503)--(8439,-3517)--(8379,-3530)--(8316,-3544)--(8250,-3557)
  --(8180,-3571)--(8107,-3584)--(8031,-3597)--(7950,-3611)--(7866,-3624)--(7778,-3638)
  --(7733,-3644)--(7687,-3651)--(7640,-3658)--(7592,-3664)--(7542,-3671)--(7491,-3678)
  --(7438,-3685)--(7384,-3692)--(7329,-3699)--(7271,-3706)--(7212,-3714)--(7151,-3721)
  --(7088,-3729)--(7023,-3736)--(6956,-3744)--(6887,-3752)--(6816,-3760)--(6742,-3768)
  --(6666,-3777)--(6588,-3785)--(6507,-3794)--(6423,-3803)--(6337,-3812)--(6248,-3822)
  --(6157,-3831)--(6063,-3841)--(5966,-3851)--(5866,-3861)--(5764,-3872)--(5659,-3882)
  --(5551,-3893)--(5441,-3904)--(5328,-3916)--(5212,-3927)--(5094,-3939)--(4974,-3951)
  --(4851,-3963)--(4727,-3975)--(4600,-3987)--(4472,-4000)--(4343,-4012)--(4213,-4025)
  --(4082,-4038)--(3951,-4050)--(3819,-4063)--(3688,-4075)--(3558,-4088)--(3428,-4100)
  --(3301,-4112)--(3175,-4124)--(3052,-4136)--(2932,-4147)--(2815,-4159)--(2702,-4169)
  --(2593,-4180)--(2489,-4189)--(2389,-4199)--(2295,-4208)--(2207,-4216)--(2124,-4224)
  --(2048,-4231)--(1977,-4237)--(1913,-4243)--(1855,-4249)--(1804,-4254)--(1758,-4258)
  --(1719,-4262)--(1685,-4265)--(1656,-4267)--(1633,-4270)--(1615,-4271)--(1600,-4273)
  --(1590,-4274)--(1583,-4274)--(1578,-4275)--(1576,-4275)--(1575,-4275);
\filldraw (1350,-1575)--(1348,-1575)--(1343,-1575)--(1335,-1575)--(1321,-1575)--(1302,-1575)
  --(1277,-1575)--(1245,-1576)--(1207,-1576)--(1163,-1576)--(1113,-1576)--(1058,-1577)
  --(999,-1577)--(937,-1578)--(873,-1578)--(807,-1579)--(741,-1580)--(676,-1581)
  --(611,-1581)--(548,-1582)--(486,-1583)--(427,-1584)--(371,-1585)--(317,-1587)
  --(266,-1588)--(218,-1589)--(173,-1590)--(130,-1592)--(89,-1593)--(51,-1595)
  --(15,-1597)--(-18,-1599)--(-50,-1601)--(-81,-1603)--(-109,-1605)--(-137,-1607)
  --(-163,-1610)--(-188,-1613)--(-216,-1616)--(-243,-1619)--(-269,-1623)--(-294,-1628)
  --(-318,-1633)--(-341,-1638)--(-363,-1644)--(-384,-1651)--(-404,-1659)--(-424,-1667)
  --(-442,-1677)--(-460,-1687)--(-477,-1699)--(-492,-1711)--(-507,-1725)--(-521,-1739)
  --(-534,-1755)--(-546,-1773)--(-557,-1791)--(-567,-1810)--(-577,-1831)--(-585,-1853)
  --(-593,-1876)--(-600,-1900)--(-607,-1925)--(-612,-1952)--(-618,-1979)--(-622,-2008)
  --(-627,-2038)--(-631,-2070)--(-634,-2103)--(-638,-2138)--(-640,-2168)--(-643,-2200)
  --(-645,-2233)--(-647,-2267)--(-649,-2303)--(-651,-2340)--(-653,-2379)--(-655,-2419)
  --(-657,-2460)--(-658,-2503)--(-659,-2546)--(-661,-2591)--(-662,-2637)--(-663,-2684)
  --(-663,-2731)--(-664,-2779)--(-664,-2827)--(-665,-2876)--(-665,-2925)--(-665,-2974)
  --(-664,-3023)--(-664,-3071)--(-663,-3119)--(-663,-3166)--(-662,-3213)--(-661,-3259)
  --(-659,-3304)--(-658,-3347)--(-657,-3390)--(-655,-3431)--(-653,-3471)--(-651,-3510)
  --(-649,-3547)--(-647,-3583)--(-645,-3617)--(-643,-3650)--(-640,-3682)--(-638,-3713)
  --(-634,-3747)--(-631,-3780)--(-627,-3812)--(-622,-3842)--(-618,-3871)--(-612,-3898)
  --(-607,-3925)--(-600,-3950)--(-593,-3974)--(-585,-3997)--(-577,-4019)--(-567,-4040)
  --(-557,-4059)--(-546,-4077)--(-534,-4095)--(-521,-4111)--(-507,-4125)--(-492,-4139)
  --(-477,-4151)--(-460,-4163)--(-442,-4173)--(-424,-4183)--(-404,-4191)--(-384,-4199)
  --(-363,-4206)--(-341,-4212)--(-318,-4217)--(-294,-4222)--(-269,-4227)--(-243,-4231)
  --(-216,-4234)--(-188,-4238)--(-163,-4240)--(-137,-4243)--(-109,-4245)--(-81,-4247)
  --(-50,-4249)--(-18,-4251)--(15,-4253)--(51,-4255)--(89,-4257)--(130,-4258)
  --(173,-4260)--(218,-4261)--(266,-4262)--(317,-4263)--(371,-4265)--(427,-4266)
  --(486,-4267)--(548,-4268)--(611,-4269)--(676,-4269)--(741,-4270)--(807,-4271)
  --(873,-4272)--(937,-4272)--(999,-4273)--(1058,-4273)--(1113,-4274)--(1163,-4274)
  --(1207,-4274)--(1245,-4274)--(1277,-4275)--(1302,-4275)--(1321,-4275)--(1335,-4275)
  --(1343,-4275)--(1348,-4275)--(1350,-4275);
\pgfsetfillcolor{.!10}
\filldraw (1575,-1800)--(1576,-1800)--(1579,-1800)--(1584,-1801)--(1592,-1801)--(1603,-1802)
  --(1619,-1803)--(1640,-1805)--(1666,-1807)--(1697,-1809)--(1734,-1812)--(1777,-1816)
  --(1827,-1819)--(1883,-1824)--(1945,-1829)--(2013,-1834)--(2088,-1840)--(2169,-1846)
  --(2255,-1853)--(2347,-1860)--(2443,-1867)--(2544,-1875)--(2649,-1884)--(2758,-1892)
  --(2870,-1901)--(2984,-1910)--(3101,-1919)--(3219,-1929)--(3338,-1938)--(3457,-1948)
  --(3577,-1958)--(3697,-1967)--(3816,-1977)--(3934,-1987)--(4051,-1996)--(4166,-2006)
  --(4279,-2015)--(4390,-2024)--(4499,-2033)--(4606,-2042)--(4710,-2051)--(4811,-2060)
  --(4910,-2068)--(5006,-2077)--(5099,-2085)--(5189,-2093)--(5277,-2101)--(5362,-2108)
  --(5444,-2116)--(5523,-2123)--(5600,-2131)--(5674,-2138)--(5745,-2145)--(5814,-2151)
  --(5881,-2158)--(5946,-2165)--(6008,-2171)--(6068,-2178)--(6126,-2184)--(6182,-2190)
  --(6236,-2196)--(6288,-2202)--(6339,-2208)--(6388,-2214)--(6436,-2220)--(6482,-2226)
  --(6526,-2232)--(6570,-2238)--(6612,-2244)--(6653,-2250)--(6725,-2261)--(6794,-2272)
  --(6860,-2283)--(6923,-2294)--(6983,-2305)--(7040,-2316)--(7095,-2328)--(7147,-2339)
  --(7196,-2351)--(7243,-2363)--(7287,-2375)--(7329,-2387)--(7369,-2399)--(7406,-2412)
  --(7440,-2424)--(7472,-2437)--(7502,-2450)--(7530,-2462)--(7555,-2475)--(7578,-2488)
  --(7598,-2500)--(7617,-2513)--(7634,-2526)--(7649,-2538)--(7662,-2551)--(7673,-2563)
  --(7683,-2575)--(7691,-2587)--(7699,-2599)--(7704,-2611)--(7709,-2622)--(7713,-2634)
  --(7716,-2645)--(7719,-2656)--(7721,-2667)--(7723,-2678)--(7724,-2689)--(7725,-2700)
  --(7727,-2721)--(7729,-2742)--(7730,-2763)--(7732,-2785)--(7733,-2807)--(7734,-2830)
  --(7735,-2853)--(7735,-2877)--(7736,-2901)--(7736,-2925)--(7736,-2949)--(7735,-2973)
  --(7735,-2997)--(7734,-3020)--(7733,-3043)--(7732,-3065)--(7730,-3087)--(7729,-3108)
  --(7727,-3129)--(7725,-3150)--(7724,-3161)--(7723,-3172)--(7721,-3183)--(7719,-3194)
  --(7716,-3205)--(7713,-3216)--(7709,-3228)--(7704,-3239)--(7699,-3251)--(7691,-3263)
  --(7683,-3275)--(7673,-3287)--(7662,-3299)--(7649,-3312)--(7634,-3324)--(7617,-3337)
  --(7598,-3350)--(7578,-3362)--(7555,-3375)--(7530,-3388)--(7502,-3400)--(7472,-3413)
  --(7440,-3426)--(7406,-3438)--(7369,-3451)--(7329,-3463)--(7287,-3475)--(7243,-3487)
  --(7196,-3499)--(7147,-3511)--(7095,-3522)--(7040,-3534)--(6983,-3545)--(6923,-3556)
  --(6860,-3567)--(6794,-3578)--(6725,-3589)--(6653,-3600)--(6612,-3606)--(6570,-3612)
  --(6526,-3618)--(6482,-3624)--(6436,-3630)--(6388,-3636)--(6339,-3642)--(6288,-3648)
  --(6236,-3654)--(6182,-3660)--(6126,-3666)--(6068,-3672)--(6008,-3679)--(5946,-3685)
  --(5881,-3692)--(5814,-3699)--(5745,-3705)--(5674,-3712)--(5600,-3719)--(5523,-3727)
  --(5444,-3734)--(5362,-3742)--(5277,-3749)--(5189,-3757)--(5099,-3765)--(5006,-3773)
  --(4910,-3782)--(4811,-3790)--(4710,-3799)--(4606,-3808)--(4499,-3817)--(4390,-3826)
  --(4279,-3835)--(4166,-3844)--(4051,-3854)--(3934,-3863)--(3816,-3873)--(3697,-3883)
  --(3577,-3892)--(3457,-3902)--(3338,-3912)--(3219,-3921)--(3101,-3931)--(2984,-3940)
  --(2870,-3949)--(2758,-3958)--(2649,-3966)--(2544,-3975)--(2443,-3983)--(2347,-3990)
  --(2255,-3997)--(2169,-4004)--(2088,-4010)--(2013,-4016)--(1945,-4021)--(1883,-4026)
  --(1827,-4031)--(1777,-4034)--(1734,-4038)--(1697,-4041)--(1666,-4043)--(1640,-4045)
  --(1619,-4047)--(1603,-4048)--(1592,-4049)--(1584,-4049)--(1579,-4050)--(1576,-4050)
  --(1575,-4050);
\filldraw (1350,-1800)--(1347,-1800)--(1342,-1800)--(1331,-1800)--(1315,-1800)--(1293,-1800)
  --(1264,-1801)--(1228,-1801)--(1188,-1802)--(1142,-1802)--(1094,-1803)--(1042,-1804)
  --(990,-1804)--(937,-1805)--(886,-1806)--(835,-1808)--(787,-1809)--(741,-1810)
  --(698,-1812)--(658,-1813)--(620,-1815)--(585,-1817)--(552,-1819)--(521,-1821)
  --(493,-1823)--(467,-1826)--(442,-1828)--(418,-1831)--(396,-1834)--(375,-1838)
  --(353,-1841)--(333,-1845)--(313,-1850)--(294,-1855)--(275,-1861)--(257,-1868)
  --(240,-1875)--(223,-1883)--(208,-1892)--(192,-1902)--(178,-1913)--(164,-1926)
  --(151,-1939)--(138,-1954)--(127,-1970)--(116,-1987)--(106,-2006)--(97,-2025)
  --(88,-2046)--(80,-2068)--(73,-2091)--(66,-2115)--(60,-2141)--(55,-2167)--(50,-2195)
  --(45,-2225)--(41,-2255)--(38,-2288)--(35,-2315)--(32,-2345)--(29,-2375)--(27,-2407)
  --(24,-2440)--(22,-2475)--(20,-2511)--(18,-2548)--(17,-2586)--(15,-2626)--(14,-2666)
  --(13,-2708)--(12,-2750)--(11,-2793)--(11,-2837)--(10,-2881)--(10,-2925)--(10,-2969)
  --(11,-3013)--(11,-3057)--(12,-3100)--(13,-3142)--(14,-3184)--(15,-3224)--(17,-3264)
  --(18,-3302)--(20,-3339)--(22,-3375)--(24,-3410)--(27,-3443)--(29,-3475)--(32,-3505)
  --(35,-3535)--(38,-3563)--(41,-3595)--(45,-3625)--(50,-3655)--(55,-3683)--(60,-3709)
  --(66,-3735)--(73,-3759)--(80,-3782)--(88,-3804)--(97,-3825)--(106,-3844)--(116,-3863)
  --(127,-3880)--(138,-3896)--(151,-3911)--(164,-3924)--(178,-3937)--(192,-3948)
  --(208,-3958)--(223,-3967)--(240,-3975)--(257,-3982)--(275,-3989)--(294,-3995)
  --(313,-4000)--(333,-4005)--(353,-4009)--(375,-4013)--(396,-4016)--(418,-4019)
  --(442,-4022)--(467,-4024)--(493,-4027)--(521,-4029)--(552,-4031)--(585,-4033)
  --(620,-4035)--(658,-4037)--(698,-4038)--(741,-4040)--(787,-4041)--(835,-4042)
  --(886,-4044)--(937,-4045)--(990,-4046)--(1042,-4046)--(1094,-4047)--(1142,-4048)
  --(1188,-4048)--(1228,-4049)--(1264,-4049)--(1293,-4050)--(1315,-4050)--(1331,-4050)
  --(1342,-4050)--(1347,-4050)--(1350,-4050);
\pgfsetfillcolor{.!15}
\filldraw (1575,-2025)--(1577,-2025)--(1581,-2026)--(1589,-2026)--(1601,-2028)--(1619,-2029)
  --(1642,-2032)--(1672,-2035)--(1708,-2038)--(1751,-2042)--(1800,-2047)--(1856,-2053)
  --(1916,-2059)--(1982,-2066)--(2051,-2073)--(2124,-2080)--(2200,-2088)--(2276,-2096)
  --(2354,-2104)--(2432,-2113)--(2509,-2121)--(2585,-2130)--(2660,-2138)--(2732,-2146)
  --(2802,-2154)--(2870,-2162)--(2934,-2170)--(2996,-2177)--(3055,-2185)--(3112,-2192)
  --(3165,-2199)--(3216,-2206)--(3264,-2213)--(3309,-2220)--(3352,-2227)--(3393,-2233)
  --(3431,-2240)--(3468,-2247)--(3503,-2253)--(3536,-2260)--(3567,-2267)--(3597,-2274)
  --(3625,-2281)--(3653,-2288)--(3693,-2299)--(3731,-2310)--(3767,-2322)--(3801,-2334)
  --(3833,-2347)--(3863,-2360)--(3891,-2374)--(3918,-2388)--(3942,-2402)--(3965,-2417)
  --(3986,-2433)--(4005,-2448)--(4022,-2464)--(4038,-2480)--(4052,-2496)--(4064,-2513)
  --(4075,-2529)--(4085,-2545)--(4093,-2562)--(4099,-2578)--(4105,-2594)--(4110,-2609)
  --(4114,-2625)--(4117,-2640)--(4120,-2655)--(4122,-2670)--(4123,-2685)--(4125,-2700)
  --(4127,-2721)--(4129,-2742)--(4130,-2763)--(4132,-2785)--(4133,-2807)--(4134,-2830)
  --(4135,-2853)--(4135,-2877)--(4136,-2901)--(4136,-2925)--(4136,-2949)--(4135,-2973)
  --(4135,-2997)--(4134,-3020)--(4133,-3043)--(4132,-3065)--(4130,-3087)--(4129,-3108)
  --(4127,-3129)--(4125,-3150)--(4123,-3165)--(4122,-3180)--(4120,-3195)--(4117,-3210)
  --(4114,-3225)--(4110,-3241)--(4105,-3256)--(4099,-3272)--(4093,-3288)--(4085,-3305)
  --(4075,-3321)--(4064,-3337)--(4052,-3354)--(4038,-3370)--(4022,-3386)--(4005,-3402)
  --(3986,-3417)--(3965,-3433)--(3942,-3448)--(3918,-3462)--(3891,-3476)--(3863,-3490)
  --(3833,-3503)--(3801,-3516)--(3767,-3528)--(3731,-3540)--(3693,-3551)--(3653,-3563)
  --(3625,-3569)--(3597,-3576)--(3567,-3583)--(3536,-3590)--(3503,-3597)--(3468,-3603)
  --(3431,-3610)--(3393,-3617)--(3352,-3623)--(3309,-3630)--(3264,-3637)--(3216,-3644)
  --(3165,-3651)--(3112,-3658)--(3055,-3665)--(2996,-3673)--(2934,-3680)--(2870,-3688)
  --(2802,-3696)--(2732,-3704)--(2660,-3712)--(2585,-3720)--(2509,-3729)--(2432,-3737)
  --(2354,-3746)--(2276,-3754)--(2200,-3762)--(2124,-3770)--(2051,-3777)--(1982,-3784)
  --(1916,-3791)--(1856,-3797)--(1800,-3803)--(1751,-3808)--(1708,-3812)--(1672,-3815)
  --(1642,-3818)--(1619,-3821)--(1601,-3822)--(1589,-3824)--(1581,-3824)--(1577,-3825)
  --(1575,-3825);
\filldraw (1350,-2025)--(1347,-2025)--(1339,-2025)--(1327,-2025)--(1308,-2026)--(1284,-2027)
  --(1256,-2027)--(1225,-2029)--(1192,-2030)--(1160,-2031)--(1129,-2033)--(1100,-2035)
  --(1074,-2037)--(1049,-2040)--(1027,-2043)--(1006,-2046)--(987,-2049)--(970,-2053)
  --(953,-2058)--(938,-2063)--(924,-2067)--(910,-2073)--(897,-2079)--(884,-2086)
  --(871,-2094)--(858,-2103)--(845,-2114)--(833,-2125)--(821,-2138)--(809,-2153)
  --(798,-2169)--(787,-2186)--(777,-2205)--(767,-2225)--(758,-2247)--(750,-2270)
  --(742,-2294)--(735,-2320)--(729,-2347)--(723,-2376)--(717,-2406)--(713,-2438)
  --(709,-2462)--(706,-2488)--(703,-2515)--(700,-2543)--(698,-2573)--(696,-2604)
  --(694,-2636)--(692,-2669)--(690,-2703)--(689,-2739)--(687,-2775)--(687,-2812)
  --(686,-2849)--(686,-2887)--(685,-2925)--(686,-2963)--(686,-3001)--(687,-3038)
  --(687,-3075)--(689,-3111)--(690,-3147)--(692,-3181)--(694,-3214)--(696,-3246)
  --(698,-3277)--(700,-3307)--(703,-3335)--(706,-3362)--(709,-3388)--(713,-3413)
  --(717,-3444)--(723,-3474)--(729,-3503)--(735,-3530)--(742,-3556)--(750,-3580)
  --(758,-3603)--(767,-3625)--(777,-3645)--(787,-3664)--(798,-3681)--(809,-3697)
  --(821,-3712)--(833,-3725)--(845,-3736)--(858,-3747)--(871,-3756)--(884,-3764)
  --(897,-3771)--(910,-3777)--(924,-3783)--(938,-3788)--(953,-3792)--(970,-3797)
  --(987,-3801)--(1006,-3804)--(1027,-3807)--(1049,-3810)--(1074,-3813)--(1100,-3815)
  --(1129,-3817)--(1160,-3819)--(1192,-3820)--(1225,-3821)--(1256,-3823)--(1284,-3823)
  --(1308,-3824)--(1327,-3825)--(1339,-3825)--(1347,-3825)--(1350,-3825);
\pgfsetlinewidth{+15\XFigu}
\draw (1800,-2925)--(9675,-2925);
\draw (2250,-2835)--(9000,-2835);
\draw (2475,-2745)--(4050,-2745);
\draw (2700,-2655)--(3825,-2655);
\draw (4500,-2745)--(6075,-2745);
\draw (4950,-2655)--(6750,-2655);
\draw (6525,-2745)--(7650,-2745);
\draw (5400,-2565)--(7425,-2565);
\draw (8100,-2745)--(8550,-2745);
\pgfsetlinewidth{+45\XFigu}
\pgfsetdash{{+150\XFigu}{+75\XFigu}{+15\XFigu}{+75\XFigu}}{+0pt}
\draw (4275,-2610)--(4275,-3375);
\draw (2025,-2610)--(2025,-3330);
\draw (7875,-2610)--(7875,-3375);
\draw (9225,-2610)--(9225,-3375);
\pgfsetfillcolor{.}
\pgftext[base,left,at=\pgfqpointxy{1800}{-2475}] {\fontsize{7}{8.4}\usefont{T1}{ptm}{m}{n}$B_{i-2}$};
\pgftext[base,left,at=\pgfqpointxy{4050}{-2475}] {\fontsize{7}{8.4}\usefont{T1}{ptm}{m}{n}$B_{i-1}$};
\pgftext[base,left,at=\pgfqpointxy{7650}{-2475}] {\fontsize{7}{8.4}\usefont{T1}{ptm}{m}{n}$B_{i}$};
\pgftext[base,left,at=\pgfqpointxy{9000}{-2475}] {\fontsize{7}{8.4}\usefont{T1}{ptm}{m}{n}$B_{i+1}$};
\pgftext[base,left,at=\pgfqpointxy{2925}{-3240}] {\fontsize{10}{12}\usefont{T1}{ptm}{m}{n}$\partial V_{i-1}$};
\pgftext[base,left,at=\pgfqpointxy{5310}{-3240}] {\fontsize{10}{12}\usefont{T1}{ptm}{m}{n}$\partial V_{i}$};
\pgftext[base,left,at=\pgfqpointxy{8100}{-3285}] {\fontsize{10}{12}\usefont{T1}{ptm}{m}{n}$\partial V_{i+1}$};
\pgftext[base,left,at=\pgfqpointxy{810}{-2700}] {\fontsize{8}{9.6}\usefont{T1}{ptm}{m}{n}$V_{i-1}$};
\pgftext[base,left,at=\pgfqpointxy{180}{-2970}] {\fontsize{8}{9.6}\usefont{T1}{ptm}{m}{n}$V_{i}$};
\pgftext[base,left,at=\pgfqpointxy{-540}{-3240}] {\fontsize{8}{9.6}\usefont{T1}{ptm}{m}{n}$V_{i+1}$};
\endtikzpicture%
\caption{Sets $B_i$, $V_i$ and $\partial V_i$ on the interval
  representation of $G$ (for instance, $B_i$ contains all the
  intervals crossing the corresponding dotted line, $\partial V_i$
  contains all the intervals intersecting the zone between $B_{i-1}$
  and $B_i$, and $V_i$ contains all the intervals intersecting the
  corresponding zone).}\label{fig:int-rep}
\end{center}  
\end{figure}

\begin{claim}\label{claim2}
For every $i\in [k]$, the graph $\partial G_i$ is either a clique of
order less than $t$ or a $t$-connected graph.
\end{claim}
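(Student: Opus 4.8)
The plan is to apply Claim~\ref{claim1} to the interval graph $\partial G_i$ itself --- an induced subgraph of $G$, hence again an interval graph via the inherited representation --- and to translate the resulting description of its minimal vertex cuts back into the language of the clique-sets $C_m$ of $G$. It is convenient to set $j_0:=1$, so that the formula $\partial V_i=(V_i\setminus V_{i-1})\cup B_{i-1}$ also yields $\partial V_1=V_1$ and the argument below needs no separate case for $i=1$. Let $y_1<\dots<y_{2m}$ be the endpoints of the intervals $(I(u))_{u\in\partial V_i}$, where $m=|\partial V_i|$, and for $\ell\in[2m-1]$ let $C'_\ell$ and $c'_\ell$ be the clique-sets of $\partial G_i$ and their cardinalities, defined exactly as the $C_m$ and $c_m$ were for $G$. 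If $\partial G_i$ is complete, the statement holds at once, since a complete graph on at least $t$ vertices is $t$-connected; so assume $\partial G_i$ is not complete, and let $C$ be a minimal vertex cut of $\partial G_i$.

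By Claim~\ref{claim1} applied to $\partial G_i$, we have $C=C'_\ell$ for some $\ell$ with $c'_\ell<\min\{c'_{\ell-1},c'_{\ell+1}\}$; unwinding this, $y_\ell$ is the right endpoint of some interval of $\partial V_i$, and $y_{\ell+1}$ the left endpoint of some interval of $\partial V_i$. Write $y_\ell=x_p$ in the ordering of the $2n$ endpoints of $G$. The crucial step is to show that $y_{\ell+1}=x_{p+1}$ and $j_{i-1}<p<j_i$. From $\partial V_i=(V_i\setminus V_{i-1})\cup B_{i-1}$ one checks that every vertex of $G$ outside $\partial V_i$ has both of its endpoints at positions $\le j_{i-1}$ or both at positions $\ge j_i+1$, whereas the right endpoint of any interval of $\partial V_i$ occurs at a position $>j_{i-1}$, and the left endpoint of any interval of $\partial V_i$ at a position $\le j_i$. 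Hence $p>j_{i-1}$, the position of $y_{\ell+1}$ is at most $j_i$, and any position strictly between them carries an endpoint of a vertex outside $\partial V_i$ (because $y_\ell$ and $y_{\ell+1}$ are consecutive endpoints of intervals of $\partial V_i$) and would therefore lie in $\{j_{i-1}+1,\dots,j_i\}$, which is impossible. So $y_{\ell+1}=x_{p+1}$ and $j_{i-1}<p<j_i$.

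Since $j_{i-1}<p<j_i$, a short verification gives $C_p\subseteq\partial V_i$, so that $C=C'_\ell=C_p\cap\partial V_i=C_p$; and $x_p$ being a right endpoint while $x_{p+1}$ is a left endpoint yields $c_{p-1}=c_p+1=c_{p+1}$, so $c_p<\min\{c_{p-1},c_{p+1}\}$. Now $p$ belongs to $[2n-2]\setminus\{1\}$ and lies strictly between the consecutive indices $j_{i-1}$ and $j_i$ of the increasing list $j_0<j_1<\dots<j_k$, so $p\notin\{j_1,\dots,j_{k-1}\}$; as this list consists of every index $q\in[2n-1]\setminus\{1\}$ with $c_q<\min\{c_{q-1},c_{q+1},t\}$, it follows that $c_p\ge t$, i.e.\ $|C|=c_p\ge t$. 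Hence no minimal vertex cut of $\partial G_i$ is empty --- so $\partial G_i$ is connected --- and none has fewer than $t$ vertices --- so $\partial G_i$ is $t$-connected. I expect the crucial step of the second paragraph to be the main obstacle: it is exactly the observation that the two endpoints flanking a minimal cut of $\partial G_i$ are consecutive among all endpoints of $G$ that lets us recognize such a cut as a clique-set $C_p$ of $G$ sitting at an interior position $j_{i-1}<p<j_i$, where the defining maximality of $\{j_1,\dots,j_{k-1}\}$ can be brought to bear; the complete case and the routine verification that $C_p\subseteq\partial V_i$ are comparatively minor.
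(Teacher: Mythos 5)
Your proof is correct and follows essentially the same route as the paper: both reduce the statement to Claim~\ref{claim1} together with the fact that, by the choice of $j_1,\ldots,j_{k-1}$, no index strictly between $j_{i-1}$ and $j_i$ is a local minimum of $(c_j)$ with value below $t$. You merely make explicit the translation between minimal cuts of $\partial G_i$ and the global clique-sets $C_p$ (the ``consecutive endpoints'' step), which the paper's proof leaves implicit in its case analysis of the shape of the sequence $c_{i_1},\ldots,c_{i_2}$.
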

\begin{proof}[Proof of Claim \ref{claim2}]
Let $i\in [k]$.
By definition, 
there are indices $i_1$ and $i_2$ with $i_1<i_2$ such that 
$\partial V_i=\bigcup\limits_{j=i_1}^{i_2}C_j$.
Now, 
either $c_j<t$ for every index $j$ with $i_1\leq j\leq i_2$, 
which implies that there is an index $\ell$ with $i_1<\ell<i_2$ and
 $c_{i_1}<\ldots<c_{\ell-1}<c_{\ell}>c_{\ell+1}>\ldots>c_{i_2}$,
in which case $\partial G_i$ is a clique of order $c_\ell<t$;
or there are indices $i_1'$ and $i_2'$ with $i_1<i_1'\leq i_2'<i_2$
such that 
$c_{i_1}<c_{i_1+1}<\ldots<c_{i'_1}$,
$c_j\geq t$ for every index $j$ with $i'_1\leq j\leq i_2'$,
and 
$c_{i_2'}>c_{i_2'+1}>\ldots>c_{i_2}$, 
in which case Claim \ref{claim1} implies that $\partial G_i$ 
is $t$-connected.
\end{proof}
As explained above, 
we apply dynamic programming
calculating partial information for each $G_i$.
This information should be rich enough 
to capture the influence on $G_i$ from outside of $G_i$
of all possible cascades of 
a minimum dynamic monopoly $D$ of $(G,\tau)$.
Since the only vertices of $G_i$ with neighbors outside of $G_i$
are in $B_i$, 
this leads us to considering 
a localized version of a cascade
that specifies
(i) all possible intersections of $D$ with $B_i$,
(ii) all possible orders, 
in which the elements of $B_i$ appear in a cascade, and
(iii) all possible amounts of help that each vertex in $B_i$ 
receives from outside of $G_i$ when it enters the hull of $D$.
Consequently, for every $i\in [k]$,
a {\it local cascade} for $G_i$ 
is defined as a triple $(X_i,\prec_i,\rho_i)$, where 
\begin{enumerate}[(i)]
\item $X_i$ is a subset of $B_i$,
\item $\prec_i$ is a linear order on $B_i$ such that $u\prec_i v$ 
for every $u\in X_i$ and every $v\in B_i\setminus X_i$, and 
\item $\rho_i:B_i\setminus X\to \{ 0,1,\ldots,n\}$.
\end{enumerate}
Since $|B_i|\leq t-1$,
there are $O\Big( 2^{t-1}(t-1)!(n+1)^{t-1}\Big)$ local cascades for $G_i$.

For each local cascade for $G_i$, 
we are interested in the minimum
number of vertices from $V_i\setminus B_i$ 
that need to be added to $X_i$ 
in order to obtain the intersection with $V_i$ 
of some dynamic monopoly that is compatible with the local cascade.
More precisely, for a local cascade $(X_i,\prec_i,\rho_i)$ for $G_i$,
let ${\rm dyn}_i(X_i,\prec_i,\rho_i)$ 
be the minimum order of a subset $Y_i$ of $V_i\setminus B_i$ 
such that the following conditions hold:
\begin{enumerate}
\item[(iv)] $|(X_i\cup Y_i)\cap \partial V_j|\leq t$ for every $j\in [i]$.
\item[(v)] There is a linear extension 
$u_1\prec \ldots \prec u_{n(G_i)}$
of $\prec_i$ to $V(G_i)$ such that
$u\prec v$
for every $u\in X_i\cup Y_i$
and every $v\in V_i\setminus (X_i\cup Y_i)$, and, 
for every $j$ in $[n(G_i)]$,
\begin{enumerate}[(a)]
\item either $u_j\in X_i\cup Y_i$,
\item or $u_j\not\in Y_i\cup B_i$ and 
$\Big|N_G(u_j)\cap \{ u_1,\ldots,u_{j-1}\}\Big|\geq \tau(u_j)$,
\item or $u_j\in B_i\setminus X_i$ and 
$\Big|N_G(u_j)\cap \{ u_1,\ldots,u_{j-1}\}\Big|\geq \tau(u_j)-\rho(u_j)$.
\end{enumerate}
\end{enumerate}
If no such set $Y_i$ exists, then ${\rm dyn}_i(X_i,\prec_i,\rho_i)=\infty$.
Note that (a) and (b) are as in the definition of a cascade,
and that (c) incorporates the assumption that $u_j$ 
has $\rho(u_j)$ neighbors outside of $G_i$ 
when it enters the hull.

By definition, we have $G=G_k$, and $|B_k|=1$,
which implies that there are exactly two local cascades $(X_k,\prec_k,\rho_k)$
for $G_k$ with $\rho_k(u)=0$ for every $u\in B_k\setminus X_k$;
these are the local cascades $(B_k,\emptyset,0)$ and $(\emptyset,\emptyset,0)$.

\begin{claim}\label{claim3}
${\rm dyn}(G,\tau)=
\min\Big\{ 1+{\rm dyn}_k(B_k,\emptyset,0),
0+{\rm dyn}_k(\emptyset,\emptyset,0)\Big\}.$
\end{claim}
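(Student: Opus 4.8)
The plan is to first translate the right hand side into a plain statement about dynamic monopolies of $(G,\tau)$ subject to condition~(iv), and then to prove the two resulting inequalities; the nontrivial one is a local replacement argument powered by Theorem~\ref{theoremchordal}.

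\textbf{Unpacking the right hand side.} Write $B_k=\{w\}$. Since the linear order in a local cascade for $G_k$ is forced on the singleton $B_k$ and we only consider local cascades with $\rho_k\equiv 0$, there are exactly the two indicated local cascades, according to whether $w\in X_k$ or $w\notin X_k$. For $(B_k,\emptyset,0)$ the alternative~(c) in~(v) is vacuous, so~(v) together with~(iv) asks precisely for a dynamic monopoly $D:=\{w\}\cup Y_k$ of $(G,\tau)$ with $|D\cap\partial V_j|\le t$ for all $j\in[k]$ admitting a cascade in which the elements of $D$ come first; for $(\emptyset,\emptyset,0)$ alternative~(c) with $\rho_k(w)=0$ demands exactly what~(b) would demand for $w$, so the same is asked of $D:=Y_k$ with $w\notin D$. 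As every dynamic monopoly admits a cascade in which its elements come first and, conversely, any set admitting such a cascade is a dynamic monopoly, ${\rm dyn}_k(B_k,\emptyset,0)+1$ and ${\rm dyn}_k(\emptyset,\emptyset,0)$ are the minima of $|D|$ over dynamic monopolies satisfying condition~(iv) with $w\in D$, respectively $w\notin D$. Taking the smaller of the two, the right hand side of the claim equals
\[
{\rm dyn}^*:=\min\Big\{|D| : D \text{ is a dynamic monopoly of } (G,\tau) \text{ with } |D\cap\partial V_j|\le t \text{ for every } j\in[k]\Big\}.
\]
Since the minimum ranges over a subfamily of all dynamic monopolies, ${\rm dyn}^*\ge{\rm dyn}(G,\tau)$ is immediate.

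\textbf{The reverse inequality.} To get ${\rm dyn}^*\le{\rm dyn}(G,\tau)$ I would prove the stronger statement that \emph{every} minimum dynamic monopoly $D$ of $(G,\tau)$ already satisfies $|D\cap\partial V_j|\le t$ for all $j\in[k]$, so that $D$ is itself admissible in the minimum defining ${\rm dyn}^*$. Fix $j\in[k]$ and apply Claim~\ref{claim2}. If $\partial G_j$ is a clique of order less than $t$, then $|D\cap\partial V_j|\le|\partial V_j|<t$ and we are done. Otherwise $\partial G_j$ is $t$-connected; suppose for contradiction that $|D\cap\partial V_j|\ge t+1$. Being an induced subgraph of an interval graph, $\partial G_j$ is chordal, and the restriction of $\tau$ to $\partial V_j$ still has all values at most $t$, so Theorem~\ref{theoremchordal} provides a dynamic monopoly $D'$ of $(\partial G_j,\tau)$ with $|D'|\le t$. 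Put $D'':=(D\setminus\partial V_j)\cup D'$, so $|D''|\le|D|-(t+1)+t<|D|$. By monotonicity of the hull operator and because $\partial G_j$ is a subgraph of $G$ on which the thresholds agree, $\partial V_j=H_{(\partial G_j,\tau)}(D')\subseteq H_{(G,\tau)}(D')\subseteq H_{(G,\tau)}(D'')$; hence $D=(D\setminus\partial V_j)\cup(D\cap\partial V_j)\subseteq D''\cup\partial V_j\subseteq H_{(G,\tau)}(D'')$, and therefore $V(G)=H_{(G,\tau)}(D)\subseteq H_{(G,\tau)}(D'')$. Thus $D''$ is a dynamic monopoly of order smaller than $D$, contradicting the minimality of $D$. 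Combining this with the previous paragraph gives ${\rm dyn}(G,\tau)\le{\rm dyn}^*\le{\rm dyn}(G,\tau)$, proving the claim.

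\textbf{Main obstacle and routine points.} The delicate step is the replacement in the second paragraph: one must be sure that deleting $D\cap\partial V_j$ and inserting the internal target set $D'$ does not derail the infection anywhere else in $G$. This is exactly what monotonicity of the hull buys us --- once $D'$ infects all of $\partial V_j$ using only edges of $\partial G_j$ (hence, a fortiori, only edges of $G$), the old seed set $D$ lies entirely inside the new hull, so the new hull must contain $H_{(G,\tau)}(D)=V(G)$. The remaining ingredients are routine: that $\partial G_j$ is a chordal (indeed interval) graph so Theorem~\ref{theoremchordal} applies; that a $t$-connected graph has at least $t+1$ vertices, so the replacement strictly decreases the order when $t\ge1$, while the case $t=0$ is trivial since then $\emptyset$ is a dynamic monopoly; and the bookkeeping identifying the two distinguished local cascades of $G_k$ with the dichotomy $w\in D$ versus $w\notin D$ and checking that condition~(iv) for $i=k$ spells out $|D\cap\partial V_j|\le t$ for all $j\in[k]$.
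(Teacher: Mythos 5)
Your proposal is correct and follows essentially the same route as the paper: the heart of both arguments is the observation that a minimum dynamic monopoly must meet each $\partial V_j$ in at most $t$ vertices, proved by the same replacement argument combining Claim~\ref{claim2} with Theorem~\ref{theoremchordal}, together with the observation that for $i=k$ and $\rho_k=0$ conditions (iv) and (v) just say that $X_k\cup Y_k$ is a dynamic monopoly satisfying (iv). Your reformulation of the right-hand side as a constrained minimum ${\rm dyn}^*$ is only a cosmetic reorganization of the paper's two inequalities, and your spelled-out monotonicity argument for the hull correctly fills in a step the paper leaves implicit.
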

\begin{proof}[Proof of Claim \ref{claim3}]
Let $D$ be a dynamic monopoly of $(G,\tau)$ 
of order ${\rm dyn}(G,\tau)$.  

Our first goal is to show that (iv) holds for 
$i=k$,
$X_k=D\cap B_k$, and
$Y_k=D\setminus X_k$.  
Suppose, for contradiction, that 
$|D\cap \partial V_j|> t$ for some $j\in [k]$.
Clearly, $\partial G_j$ can not be a clique of size less than $t$ in this case.
Therefore, by Claim~\ref{claim2},
$\partial G_j$ is $t$-connected, and, 
by Theorem~\ref{theoremchordal},
there is a dynamic monopoly $D_j$ of $(\partial G_j,\tau)$ of size at most $t$.
Now, $(D\setminus \partial V_j)\cup D_j$ 
is a dynamic monopoly of $(G,\tau)$ 
of order less than $D$, which is a contradiction.
Hence, (iv) holds. 

Let $u_1\prec \cdots \prec u_n$ be a cascade for $D$.
Since this cascade is a linear extension 
of the trivial linear order on the one-element set $B_k$,
we obtain (v) with $\rho_k(u)=0$ for every $u\in B_k\setminus X_k$.
This implies 
$|X_k|+{\rm dyn}_k(X_k,\emptyset,0)\leq |X_k|+|Y_k|={\rm dyn}(G,\tau)$.

Conversely, let $X_k\subseteq B_k$ be such that 
$\min\Big\{ 1+{\rm dyn}_k(B_k,\emptyset,0),
0+{\rm dyn}_k(\emptyset,\emptyset,0)\Big\}$
equals $|X_k|+{\rm dyn}_k(X_k,\emptyset,0)$.
If $Y_k$ is as in the definition of ${\rm dyn}_k(X_k,\emptyset,0)$,
then (v) and $\rho_k=0$ imply that $X_k\cup Y_k$ is a dynamic monopoly of $(G,\tau)$, 
which implies
${\rm dyn}(G,\tau)\leq |X_k|+|Y_k|=|X_k|+{\rm dyn}_k(X_k,\emptyset,0)$.
\end{proof}
Our next two claims imply that 
the values ${\rm dyn}_i(X_i,\prec_i,\rho_i)$
can be determined recursively in polynomial time.

\begin{claim}\label{claim4}
For every local cascade $(X_1,\prec_1,\rho_1)$ for $G_1$,
the value ${\rm dyn}_1(X_1,\prec_1,\rho_1)$
can be computed in polynomial time.
\end{claim}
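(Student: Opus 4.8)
The plan is as follows. Recall that $\partial V_1=V_1$, so for $i=1$ condition~(iv) merely requires $|X_1\cup Y_1|\leq t$; in particular every set $Y_1$ that is feasible for ${\rm dyn}_1(X_1,\prec_1,\rho_1)$ satisfies $|Y_1|\leq t-|X_1|\leq t$. I would therefore enumerate, for $r=0,1,\ldots,t-|X_1|$, all subsets $Y_1\subseteq V_1\setminus B_1$ with $|Y_1|=r$; since $t$ is a fixed constant, there are only $O(n^t)$ such candidate sets, so it suffices to exhibit a polynomial-time test deciding, for a fixed $S:=X_1\cup Y_1$, whether some linear extension of $\prec_1$ as in~(v) exists, and then to output the smallest $r$ admitting a feasible $Y_1$ (and $\infty$ if none does).

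For the feasibility test I would run a greedy activation process on $G_1$. Maintain a set $A$ of active vertices, initialised to $A:=S$, together with a pointer to the $\prec_1$-smallest vertex of $B_1\setminus X_1$ that is not yet in $A$. Then iterate: if there is a vertex $v\in V_1\setminus(B_1\cup Y_1)$ with $v\notin A$ and $|N_G(v)\cap A|\geq\tau(v)$, add one such $v$ to $A$; otherwise, if the current pointer vertex $w$ satisfies $|N_G(w)\cap A|\geq\tau(w)-\rho_1(w)$, add $w$ to $A$ and advance the pointer; otherwise stop. Declare $S$ feasible exactly when the process terminates with $A=V_1$. Each iteration strictly enlarges $A\subseteq V_1$, so there are at most $|V_1|$ iterations and each takes polynomial time, giving a polynomial-time test.

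It remains to argue that this test is correct, which is the delicate point. If the process ends with $A=V_1$, then the linear order that lists first $X_1$ in $\prec_1$-order, then the vertices of $Y_1$, and finally the remaining vertices of $V_1$ in the order they were added to $A$, extends $\prec_1$, has $S$ as a prefix, and satisfies (a)--(c): indeed $B_1\cap S=X_1$, so every vertex outside $S$ lies either in $B_1\setminus X_1$ (covered by~(c)) or outside $B_1$ (covered by~(b)), and for each such vertex its set of predecessors in this order is precisely its set of active neighbours at the moment it was added, which is what the process checked. For the converse, suppose some feasible $Y_1$ together with a witnessing extension exists. The activation rule is monotone --- an active vertex stays active, and additional active neighbours never decrease $|N_G(u)\cap A|$ --- so an exchange argument shows that eagerly activating vertices of $V_1\setminus(B_1\cup Y_1)$ first, and activating the vertices of $B_1\setminus X_1$ in the order $\prec_1$ (the only admissible order for them, and, since $B_1$ is a clique, the one maximising the number of already-active neighbours each of them sees), loses nothing; hence the greedy process also reaches $A=V_1$. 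Establishing this monotonicity/exchange argument cleanly --- i.e.\ verifying that the externally prescribed order $\prec_1$ on $B_1$ never forces the greedy into a dead end that a cleverer interleaving would avoid --- is the main obstacle; once it is in place, combining the test with the enumeration over the $O(n^t)$ candidate sets $Y_1$ proves the claim.
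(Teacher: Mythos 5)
Your proposal is correct and follows essentially the same route as the paper: enumerate the $O(n^t)$ candidates for $Y_1$ permitted by condition~(iv) with $\partial V_1=V_1$, and decide feasibility of~(v) for each candidate in polynomial time. Your greedy constrained-hull simulation (justified by monotonicity/confluence of the activation process) is just a repackaging of the paper's equivalent conditions~(b$'$) and~(c$'$), which compute the same prefix hulls and are glued together by the same kind of eager-scheduling argument, so the step you flag as the ``main obstacle'' is resolved exactly as you sketch it.
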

\begin{proof}[Proof of Claim \ref{claim4}]
Let $v_1\prec_1 \ldots \prec_1 v_p$ be the linear order $\prec_1$ on $B_1$.
Since $V_1=\partial V_1$, 
every subset $Y_1$ of $V_1\setminus B_1$ 
satisfying condition (iv) has at most $t-|X_1|$ elements,
which implies that there are only $O(n^t)$ candidates for $Y_1$.
For each such set $Y_1$,
condition (v) holds if and only if
\begin{itemize}
\item[(b$'$)] $B_1\cup Y_1$ is a dynamic monopoly of $(G_1,\tau)$, and
\item[(c$'$)] for every $i$ in $[p]$ with $v_i\in B_1\setminus X_1$, 
the hull of the set
$$\Big\{ v_j:j\in [i-1]\Big\}
\cup X_1\cup Y_1$$
in $\Big(G_1-\Big\{ v_j:j\in [p]\setminus [i-1]\Big\},\tau\Big)$
contains at least $\tau(v_i)-\rho(v_i)$ many neighbors of $v_i$.
\end{itemize}
In fact, if there is a linear extension $u_1\prec \ldots \prec u_{n(G_i)}$ of $\prec_1$ satisfying (v), then (a) and (b) imply (b$'$), and (c) implies (c$'$).
Conversely, if (b$'$) and (c$'$) hold, 
then concatenating cascades for the $p$ hulls considered in (c$'$)
for $i$ from $1$ up to $p$,
and removing all but the first appearance of each vertex 
in the resulting sequence,
yields a linear order satisfying (v).
Since (b$'$) and (c$'$) can be checked efficiently
for the polynomially many candidates for $Y_1$,
the claim follows.
\end{proof}

\begin{claim}\label{claim5}
For every $i\in [k]\setminus \{ 1\}$ 
and every local cascade $(X_i,\prec_i,\rho_i)$ for $G_i$,
given the values ${\rm dyn}_{i-1}(X_{i-1},\prec_{i-1}, \rho_{i-1})$
for all local cascades $(X_{i-1},\prec_{i-1},\rho_{i-1})$ for $G_{i-1}$,
the value ${\rm dyn}_i(X_i,\prec_i,\rho_i)$
can be computed in polynomial time.
\end{claim}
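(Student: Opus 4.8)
The plan is to process the indices $j_{i-1}+1,\ldots,j_i$ one interval-slice at a time, maintaining for each partial local cascade of $G_{i-1}$ a set of "states" that describe the relevant information about how the hull of a candidate dynamic monopoly interacts with the newly added region $V_i\setminus V_{i-1}$. Concretely, fix a local cascade $(X_i,\prec_i,\rho_i)$ for $G_i$. I would guess which local cascade $(X_{i-1},\prec_{i-1},\rho_{i-1})$ for $G_{i-1}$ is induced by the dynamic monopoly realizing ${\rm dyn}_i(X_i,\prec_i,\rho_i)$; since $|B_{i-1}|\le t-1$, there are only $O\big(2^{t-1}(t-1)!(n+1)^{t-1}\big)$ choices, a polynomial number. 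Given such a guess, I want to compute the minimum size of a set $Z\subseteq \partial V_i\setminus B_i$ (together with a choice of how to interleave the elements of $B_{i-1}$, whose positions in the global cascade are already constrained by $\prec_{i-1}$, and the elements of $B_i$, constrained by $\prec_i$) so that the part of the cascade restricted to $\partial G_i$ satisfies conditions (a)–(c) — where a vertex of $B_{i-1}$ is allowed to be "activated from outside" with the help amount $\rho_{i-1}$ recorded in the guessed local cascade, and a vertex of $B_i$ is allowed help $\rho_i$. Then ${\rm dyn}_i(X_i,\prec_i,\rho_i)$ is the minimum over all compatible guesses of ${\rm dyn}_{i-1}(X_{i-1},\prec_{i-1},\rho_{i-1})+|Z|$, after discarding guesses that violate condition (iv) on $\partial V_i$.

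The key structural point, supplied by Claim~\ref{claim2}, is that $\partial G_i$ is either a clique of order less than $t$ or a $t$-connected graph; combined with Theorem~\ref{theoremchordal} and the argument already used in Claim~\ref{claim3}, this forces $|(X_i\cup Z)\cap\partial V_i|\le t$ and, more importantly, $|(X_i\cup Z)\cap\partial V_j|\le t$ for the single new index $j=i$ that was not already controlled in the subproblem for $G_{i-1}$. Hence the set $Z$ we are searching for has at most $t$ elements, so there are only $O(n^t)$ candidates for $Z$; for each candidate we only have to check whether a valid interleaving exists. To check feasibility of an interleaving I would mimic the argument of Claim~\ref{claim4}: the order in which the $O(t)$ "marked" vertices $B_{i-1}\cup B_i$ appear is constrained to refine both $\prec_{i-1}$ and $\prec_i$, so there are only $O\big((2(t-1))!\big)$ relevant interleavings of them; fixing one, condition (v) is equivalent to a finite list of hull-containment tests inside $\partial G_i$ with the appropriate vertices deleted (exactly as in (b$'$) and (c$'$)), each of which is checkable in polynomial time by iterating the threshold rule. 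The concatenate-the-segment-cascades-and-delete-repeats trick from Claim~\ref{claim4} again shows that passing all these local tests is equivalent to the existence of a global linear extension satisfying (v).

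One subtlety I would be careful about: the dynamic monopoly $D$ realizing ${\rm dyn}_i(X_i,\prec_i,\rho_i)$ contributes to the hull of vertices in $\partial V_i$ not only through $\partial V_i$ itself but also through $B_{i-1}$ from the side of $G_{i-1}$; this is exactly the information packaged in the order $\prec_{i-1}$ and the help function $\rho_{i-1}$ of the guessed local cascade, and the point is that $\rho_{i-1}(u)$ — the number of neighbors of $u\in B_{i-1}$ in $V_{i-1}$ that are already in the hull when $u$ enters — together with the fact that all neighbors of $u$ outside $V_{i-1}$ lie in $\partial V_i$, is enough to decide the activation of $u$ using only $\partial G_i$ and the count $\rho_{i-1}(u)$. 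Dually, the help $\rho_i(u)$ a vertex $u\in B_i$ receives from outside $G_i$ is passed in from the caller. I expect the main obstacle to be purely bookkeeping: verifying that the help values propagate consistently, i.e. that a linear extension realizing ${\rm dyn}_i$ restricts to a linear extension realizing ${\rm dyn}_{i-1}$ for the guessed local cascade and conversely that any such pair can be glued, so that the recursion computes exactly the claimed minimum and not merely an upper or lower bound. Once that correspondence is nailed down, counting the choices — polynomially many target local cascades for $G_{i-1}$, $O(n^t)$ choices of $Z$, $O\big((2t)!\big)$ interleavings, each feasibility check polynomial — gives the claimed polynomial running time, and iterating over $i\in[k]$ with $k\le 2n$ together with Claim~\ref{claim3} and Claim~\ref{claim4} finishes the proof of Theorem~\ref{theorem2}.
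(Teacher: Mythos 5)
Your overall architecture coincides with the paper's: you enumerate the interface data between $G_{i-1}$ and $\partial G_i$ (the new seed vertices in $\partial V_i\setminus B_i$, whose number is bounded by condition (iv) together with Claim~\ref{claim2}, plus an interleaving of $B_{i-1}\cup B_i$ refining $\prec_i$), you verify feasibility by hull computations in $\partial G_i$ with the later boundary vertices deleted, and you glue with the concatenation trick from Claim~\ref{claim4}. The paper does exactly this, enumerating triples $\big(X_{i-1}'',\partial Y_i,\prec_{(i-1,i)}\big)$, and your counting of the choices and of $k$ is fine.

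The genuine gap is in how $\rho_{i-1}$ enters the verification, which is precisely the point you defer as ``bookkeeping''. In the paper's definition of a local cascade (item (iii) and condition (c)), $\rho_{i-1}(u)$ is the help $u\in B_{i-1}$ receives from \emph{outside} $G_{i-1}$, and the stored value ${\rm dyn}_{i-1}(X_{i-1},\prec_{i-1},\rho_{i-1})$ certifies only that $u$ acquires at least $\tau(u)-\rho_{i-1}(u)$ hull-neighbors \emph{inside} $V_{i-1}$; it does not certify that the $G_{i-1}$ side delivers $\rho_{i-1}(u)$ of anything. Your check inside $\partial G_i$ grants $u$ the same quantity $\rho_{i-1}(u)$ as a discount again (``allowed to be activated from outside with the help amount $\rho_{i-1}$''), and your later parenthetical even glosses $\rho_{i-1}(u)$ as the number of hull-neighbors of $u$ inside $V_{i-1}$ --- the opposite of the paper's definition. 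As written the recursion is circular: for a vertex $u$ with $\rho_{i-1}(u)=\tau(u)$ both the table lookup and your $\partial G_i$ test become vacuous while $u$ may have no activated neighbors at all, so infeasible instances would be accepted. The repair is the one substantive idea in the paper's proof of this claim: $\rho_{i-1}$ is not guessed independently and then ``verified'', but \emph{computed} from the guessed triple, as $\rho_{i-1}(v_j)=h_j+\rho_i(v_j)$ for $v_j\in B_{i-1}\cap B_i$ and $\rho_{i-1}(v_j)=h_j$ otherwise, where $h_j$ is the number of neighbors of $v_j$ actually contained in the hull, taken in $\partial G_i$ minus the later boundary vertices, of the earlier boundary vertices together with the seed; that is, $\rho_{i-1}$ records help \emph{delivered} by the $\partial G_i$ side, which the $G_{i-1}$ side then consumes as a discount. (A further minor point: your guessed $Z\subseteq\partial V_i\setminus B_i$ and your guessed $X_{i-1}$ overlap on $B_{i-1}\setminus B_i$, so their consistency there must be imposed explicitly.)
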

\begin{proof}[Proof of Claim \ref{claim5}]
By definition, we have $B_i\cap V_{i-1}\subseteq B_{i-1}$.  Therefore,
the two sets $B_{i-1}'=B_i\cap V_{i-1}$ and
$B''_{i-1}=B_{i-1}\setminus B'_{i-1}$ partition the set $B_{i-1}$.
Let $X_{i-1}'=X_i\cap B_{i-1}$.  Note that $B_{i-1}'=B_i\cap B_{i-1}$,
$X_{i-1}'\subseteq B_{i-1}'$, and $B''_{i-1}=B_{i-1}\setminus B_i$,
cf. Figure~\ref{fig:set-Xi}.

\begin{figure}[t]
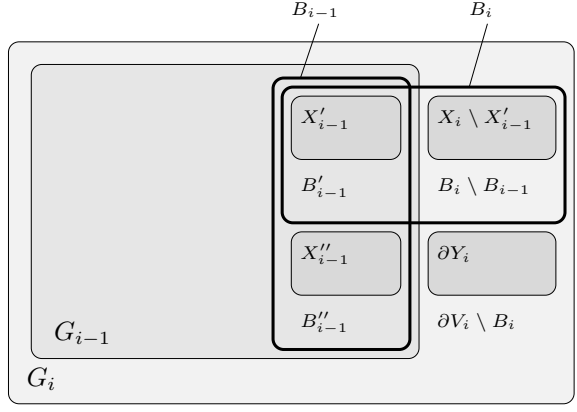

\begin{center}
\ifx\XFigwidth\undefined\dimen1=0pt\else\dimen1\XFigwidth\fi
\divide\dimen1 by 5604
\ifx\XFigheight\undefined\dimen3=0pt\else\dimen3\XFigheight\fi
\divide\dimen3 by 4041
\ifdim\dimen1=0pt\ifdim\dimen3=0pt\dimen1=2486sp\dimen3\dimen1
  \else\dimen1\dimen3\fi\else\ifdim\dimen3=0pt\dimen3\dimen1\fi\fi
\tikzpicture[x=+\dimen1, y=+\dimen3]
{\ifx\XFigu\undefined\catcode`\@11
\def\temp{\alloc@1\dimen\dimendef\insc@unt}\temp\XFigu\catcode`\@12\fi}
\XFigu2486sp
% Uncomment to scale line thicknesses with the same
% factor as width of the drawing.
%\pgfextractx\XFigu{\pgfqpointxy{1}{1}}
\ifdim\XFigu<0pt\XFigu-\XFigu\fi
\definecolor{xfigc32}{rgb}{0.612,0.000,0.000}
\definecolor{xfigc33}{rgb}{0.549,0.549,0.549}
\definecolor{xfigc34}{rgb}{0.549,0.549,0.549}
\definecolor{xfigc35}{rgb}{0.259,0.259,0.259}
\definecolor{xfigc36}{rgb}{0.549,0.549,0.549}
\definecolor{xfigc37}{rgb}{0.259,0.259,0.259}
\definecolor{xfigc38}{rgb}{0.549,0.549,0.549}
\definecolor{xfigc39}{rgb}{0.259,0.259,0.259}
\definecolor{xfigc40}{rgb}{0.549,0.549,0.549}
\definecolor{xfigc41}{rgb}{0.259,0.259,0.259}
\definecolor{xfigc42}{rgb}{0.549,0.549,0.549}
\definecolor{xfigc43}{rgb}{0.259,0.259,0.259}
\definecolor{xfigc44}{rgb}{0.557,0.557,0.557}
\definecolor{xfigc45}{rgb}{0.761,0.761,0.761}
\definecolor{xfigc46}{rgb}{0.431,0.431,0.431}
\definecolor{xfigc47}{rgb}{0.267,0.267,0.267}
\definecolor{xfigc48}{rgb}{0.557,0.561,0.557}
\definecolor{xfigc49}{rgb}{0.443,0.443,0.443}
\definecolor{xfigc50}{rgb}{0.682,0.682,0.682}
\definecolor{xfigc51}{rgb}{0.200,0.200,0.200}
\definecolor{xfigc52}{rgb}{0.580,0.576,0.584}
\definecolor{xfigc53}{rgb}{0.455,0.439,0.459}
\definecolor{xfigc54}{rgb}{0.333,0.333,0.333}
\definecolor{xfigc55}{rgb}{0.702,0.702,0.702}
\definecolor{xfigc56}{rgb}{0.765,0.765,0.765}
\definecolor{xfigc57}{rgb}{0.427,0.427,0.427}
\definecolor{xfigc58}{rgb}{0.271,0.271,0.271}
\definecolor{xfigc59}{rgb}{0.886,0.886,0.933}
\definecolor{xfigc60}{rgb}{0.580,0.580,0.604}
\definecolor{xfigc61}{rgb}{0.859,0.859,0.859}
\definecolor{xfigc62}{rgb}{0.631,0.631,0.718}
\definecolor{xfigc63}{rgb}{0.929,0.929,0.929}
\definecolor{xfigc64}{rgb}{0.878,0.878,0.878}
\definecolor{xfigc65}{rgb}{0.525,0.675,1.000}
\definecolor{xfigc66}{rgb}{0.439,0.439,1.000}
\definecolor{xfigc67}{rgb}{0.776,0.718,0.592}
\definecolor{xfigc68}{rgb}{0.937,0.973,1.000}
\definecolor{xfigc69}{rgb}{0.863,0.796,0.651}
\definecolor{xfigc70}{rgb}{0.251,0.251,0.251}
\definecolor{xfigc71}{rgb}{0.502,0.502,0.502}
\definecolor{xfigc72}{rgb}{0.753,0.753,0.753}
\definecolor{xfigc73}{rgb}{0.667,0.667,0.667}
\definecolor{xfigc74}{rgb}{0.780,0.765,0.780}
\definecolor{xfigc75}{rgb}{0.337,0.318,0.318}
\definecolor{xfigc76}{rgb}{0.843,0.843,0.843}
\definecolor{xfigc77}{rgb}{0.522,0.502,0.490}
\definecolor{xfigc78}{rgb}{0.824,0.824,0.824}
\definecolor{xfigc79}{rgb}{0.227,0.227,0.227}
\definecolor{xfigc80}{rgb}{0.271,0.451,0.667}
\definecolor{xfigc81}{rgb}{0.482,0.475,0.647}
\definecolor{xfigc82}{rgb}{0.451,0.459,0.549}
\definecolor{xfigc83}{rgb}{0.969,0.969,0.969}
\definecolor{xfigc84}{rgb}{0.255,0.271,0.255}
\definecolor{xfigc85}{rgb}{0.388,0.365,0.808}
\definecolor{xfigc86}{rgb}{0.745,0.745,0.745}
\definecolor{xfigc87}{rgb}{0.318,0.318,0.318}
\definecolor{xfigc88}{rgb}{0.906,0.890,0.906}
\definecolor{xfigc89}{rgb}{0.000,0.000,0.286}
\definecolor{xfigc90}{rgb}{0.475,0.475,0.475}
\definecolor{xfigc91}{rgb}{0.188,0.204,0.188}
\definecolor{xfigc92}{rgb}{0.255,0.255,0.255}
\definecolor{xfigc93}{rgb}{0.780,0.714,0.588}
\definecolor{xfigc94}{rgb}{0.867,0.616,0.576}
\definecolor{xfigc95}{rgb}{0.945,0.925,0.878}
\definecolor{xfigc96}{rgb}{0.886,0.784,0.659}
\definecolor{xfigc97}{rgb}{0.882,0.882,0.882}
\definecolor{xfigc98}{rgb}{0.855,0.478,0.102}
\definecolor{xfigc99}{rgb}{0.945,0.894,0.102}
\definecolor{xfigc100}{rgb}{0.533,0.490,0.761}
\definecolor{xfigc101}{rgb}{0.690,0.631,0.576}
\definecolor{xfigc102}{rgb}{0.514,0.486,0.867}
\definecolor{xfigc103}{rgb}{0.839,0.839,0.839}
\definecolor{xfigc104}{rgb}{0.549,0.549,0.647}
\definecolor{xfigc105}{rgb}{0.290,0.290,0.290}
\definecolor{xfigc106}{rgb}{0.549,0.420,0.420}
\definecolor{xfigc107}{rgb}{0.353,0.353,0.353}
\definecolor{xfigc108}{rgb}{0.388,0.388,0.388}
\definecolor{xfigc109}{rgb}{0.718,0.608,0.451}
\definecolor{xfigc110}{rgb}{0.255,0.576,1.000}
\definecolor{xfigc111}{rgb}{0.749,0.439,0.231}
\definecolor{xfigc112}{rgb}{0.859,0.467,0.000}
\definecolor{xfigc113}{rgb}{0.855,0.722,0.000}
\definecolor{xfigc114}{rgb}{0.000,0.392,0.000}
\definecolor{xfigc115}{rgb}{0.353,0.420,0.231}
\definecolor{xfigc116}{rgb}{0.827,0.827,0.827}
\definecolor{xfigc117}{rgb}{0.557,0.557,0.643}
\definecolor{xfigc118}{rgb}{0.953,0.725,0.365}
\definecolor{xfigc119}{rgb}{0.537,0.600,0.420}
\definecolor{xfigc120}{rgb}{0.392,0.392,0.392}
\definecolor{xfigc121}{rgb}{0.718,0.902,1.000}
\definecolor{xfigc122}{rgb}{0.525,0.753,0.925}
\definecolor{xfigc123}{rgb}{0.741,0.741,0.741}
\definecolor{xfigc124}{rgb}{0.827,0.584,0.322}
\definecolor{xfigc125}{rgb}{0.596,0.824,0.996}
\definecolor{xfigc126}{rgb}{0.380,0.380,0.380}
\definecolor{xfigc127}{rgb}{0.682,0.698,0.682}
\definecolor{xfigc128}{rgb}{1.000,0.604,0.000}
\definecolor{xfigc129}{rgb}{0.549,0.612,0.420}
\definecolor{xfigc130}{rgb}{0.969,0.420,0.000}
\definecolor{xfigc131}{rgb}{0.353,0.420,0.224}
\definecolor{xfigc132}{rgb}{0.549,0.612,0.420}
\definecolor{xfigc133}{rgb}{0.549,0.612,0.482}
\definecolor{xfigc134}{rgb}{0.094,0.290,0.094}
\definecolor{xfigc135}{rgb}{0.678,0.678,0.678}
\definecolor{xfigc136}{rgb}{0.969,0.741,0.353}
\definecolor{xfigc137}{rgb}{0.388,0.420,0.612}
\definecolor{xfigc138}{rgb}{0.871,0.000,0.000}
\definecolor{xfigc139}{rgb}{0.678,0.678,0.678}
\definecolor{xfigc140}{rgb}{0.969,0.741,0.353}
\definecolor{xfigc141}{rgb}{0.678,0.678,0.678}
\definecolor{xfigc142}{rgb}{0.969,0.741,0.353}
\definecolor{xfigc143}{rgb}{0.388,0.420,0.612}
\definecolor{xfigc144}{rgb}{0.322,0.420,0.161}
\definecolor{xfigc145}{rgb}{0.580,0.580,0.580}
\definecolor{xfigc146}{rgb}{0.000,0.388,0.000}
\definecolor{xfigc147}{rgb}{0.000,0.388,0.290}
\definecolor{xfigc148}{rgb}{0.482,0.518,0.290}
\definecolor{xfigc149}{rgb}{0.906,0.741,0.482}
\definecolor{xfigc150}{rgb}{0.647,0.710,0.776}
\definecolor{xfigc151}{rgb}{0.420,0.420,0.580}
\definecolor{xfigc152}{rgb}{0.518,0.420,0.420}
\definecolor{xfigc153}{rgb}{0.322,0.612,0.290}
\definecolor{xfigc154}{rgb}{0.839,0.906,0.906}
\definecolor{xfigc155}{rgb}{0.322,0.388,0.388}
\definecolor{xfigc156}{rgb}{0.094,0.420,0.290}
\definecolor{xfigc157}{rgb}{0.612,0.647,0.710}
\definecolor{xfigc158}{rgb}{1.000,0.580,0.000}
\definecolor{xfigc159}{rgb}{1.000,0.580,0.000}
\definecolor{xfigc160}{rgb}{0.000,0.388,0.290}
\definecolor{xfigc161}{rgb}{0.482,0.518,0.290}
\definecolor{xfigc162}{rgb}{0.388,0.451,0.482}
\definecolor{xfigc163}{rgb}{0.906,0.741,0.482}
\definecolor{xfigc164}{rgb}{0.094,0.290,0.094}
\definecolor{xfigc165}{rgb}{0.969,0.741,0.353}
\definecolor{xfigc166}{rgb}{0.000,0.000,0.000}
\definecolor{xfigc167}{rgb}{0.969,0.220,0.161}
\definecolor{xfigc168}{rgb}{0.000,0.000,0.000}
\definecolor{xfigc169}{rgb}{1.000,1.000,0.322}
\definecolor{xfigc170}{rgb}{0.322,0.475,0.290}
\definecolor{xfigc171}{rgb}{0.388,0.604,0.353}
\definecolor{xfigc172}{rgb}{0.776,0.380,0.259}
\definecolor{xfigc173}{rgb}{0.906,0.412,0.259}
\definecolor{xfigc174}{rgb}{1.000,0.475,0.322}
\definecolor{xfigc175}{rgb}{0.871,0.871,0.871}
\definecolor{xfigc176}{rgb}{0.953,0.933,0.827}
\definecolor{xfigc177}{rgb}{0.961,0.682,0.365}
\definecolor{xfigc178}{rgb}{0.584,0.808,0.600}
\definecolor{xfigc179}{rgb}{0.710,0.082,0.490}
\definecolor{xfigc180}{rgb}{0.933,0.933,0.933}
\definecolor{xfigc181}{rgb}{0.518,0.518,0.518}
\definecolor{xfigc182}{rgb}{0.482,0.482,0.482}
\definecolor{xfigc183}{rgb}{0.000,0.353,0.000}
\definecolor{xfigc184}{rgb}{0.906,0.451,0.451}
\definecolor{xfigc185}{rgb}{1.000,0.796,0.192}
\definecolor{xfigc186}{rgb}{0.161,0.475,0.290}
\definecolor{xfigc187}{rgb}{0.871,0.157,0.129}
\definecolor{xfigc188}{rgb}{0.129,0.349,0.776}
\definecolor{xfigc189}{rgb}{0.973,0.973,0.973}
\definecolor{xfigc190}{rgb}{0.902,0.902,0.902}
\definecolor{xfigc191}{rgb}{0.129,0.518,0.353}
\definecolor{xfigc192}{rgb}{0.906,0.906,0.906}
\definecolor{xfigc193}{rgb}{0.443,0.459,0.443}
\definecolor{xfigc194}{rgb}{0.851,0.851,0.851}
\definecolor{xfigc195}{rgb}{0.337,0.620,0.690}
\definecolor{xfigc196}{rgb}{0.788,0.788,0.788}
\definecolor{xfigc197}{rgb}{0.875,0.847,0.875}
\definecolor{xfigc198}{rgb}{0.969,0.953,0.969}
\definecolor{xfigc199}{rgb}{0.800,0.800,0.800}
\clip(888,-4242) rectangle (6492,-201);
\tikzset{inner sep=+0pt, outer sep=+0pt}
\pgfsetlinewidth{+7.5\XFigu}
\pgfsetfillcolor{.!5}
\filldraw (6480,-4230) [rounded corners=+105\XFigu] rectangle (900,-630);
\pgfsetfillcolor{.!10}
\filldraw (4950,-3780) [rounded corners=+105\XFigu] rectangle (1125,-855);
\pgfsetfillcolor{.}
\pgftext[base,left,at=\pgfqpointxy{3690}{-360}] {\fontsize{7}{8.4}\usefont{T1}{ptm}{m}{n}$B_{i-1}$};
\pgftext[base,left,at=\pgfqpointxy{1080}{-4050}] {\fontsize{10}{12}\usefont{T1}{ptm}{m}{n}$G_{i}$};
\pgftext[base,left,at=\pgfqpointxy{1350}{-3600}] {\fontsize{10}{12}\usefont{T1}{ptm}{m}{n}$G_{i-1}$};
\pgftext[base,left,at=\pgfqpointxy{5445}{-360}] {\fontsize{7}{8.4}\usefont{T1}{ptm}{m}{n}$B_{i}$};
%\pgftext[base,left,at=\pgfqpointxy{1755}{-2160}] {\fontsize{7}{8.4}\usefont{T1}{ptm}{m}{n}$V_{i-1}\setminus B_{i-1}$};
\pgfsetlinewidth{+30\XFigu}
\draw (3510,-3690) [rounded corners=+105\XFigu] rectangle (4860,-990);
\pgfsetlinewidth{+7.5\XFigu}
\draw (3780,-990)--(3960,-450);
\pgfsetlinewidth{+30\XFigu}
\draw (6390,-2430) [rounded corners=+105\XFigu] rectangle (3600,-1080);
\pgfsetlinewidth{+7.5\XFigu}
\draw (5445,-1080)--(5625,-450);
\pgfsetfillcolor{.!15}
\filldraw (4770,-1800) [rounded corners=+105\XFigu] rectangle (3690,-1170);
\filldraw (4770,-3150) [rounded corners=+105\XFigu] rectangle (3690,-2520);
\filldraw (6300,-1800) [rounded corners=+105\XFigu] rectangle (5040,-1170);
\filldraw (6300,-3150) [rounded corners=+105\XFigu] rectangle (5040,-2520);
\pgfsetfillcolor{.}
\pgftext[base,left,at=\pgfqpointxy{3780}{-1440}] {\fontsize{7}{8.4}\usefont{T1}{ptm}{m}{n}$X'_{i-1}$};
\pgftext[base,left,at=\pgfqpointxy{3780}{-2790}] {\fontsize{7}{8.4}\usefont{T1}{ptm}{m}{n}$X''_{i-1}$};
\pgftext[base,left,at=\pgfqpointxy{5130}{-1440}] {\fontsize{7}{8.4}\usefont{T1}{ptm}{m}{n}$X_i\setminus X'_{i-1}$};
\pgftext[base,left,at=\pgfqpointxy{5130}{-2790}] {\fontsize{7}{8.4}\usefont{T1}{ptm}{m}{n}$\partial Y_i$};
\pgftext[base,left,at=\pgfqpointxy{3780}{-3465}] {\fontsize{7}{8.4}\usefont{T1}{ptm}{m}{n}$B''_{i-1}$};
\pgftext[base,left,at=\pgfqpointxy{5130}{-2115}] {\fontsize{7}{8.4}\usefont{T1}{ptm}{m}{n}$B_i\setminus B_{i-1}$};
\pgftext[base,left,at=\pgfqpointxy{3780}{-2115}] {\fontsize{7}{8.4}\usefont{T1}{ptm}{m}{n}$B'_{i-1}$};
\pgftext[base,left,at=\pgfqpointxy{5130}{-3465}] {\fontsize{7}{8.4}\usefont{T1}{ptm}{m}{n}$\partial V_i\setminus B_{i}$};
\endtikzpicture%
\caption{$G_i$ and relevant subsets of $V_i$.}\label{fig:set-Xi}
\end{center}  
\end{figure}
Our approach to determine 
${\rm dyn}_i(X_i,\prec_i,\rho_i)$
relies on considering all candidates for the two intersections 
--- later referred to as $X''_{i-1}$ and $\partial Y_i$ ---
of a set $Y_i$ as in the definition of ${\rm dyn}_i(X_i,\prec_i,\rho_i)$ 
with the two sets $B''_{i-1}$ 
and 
$\partial V_i\setminus (B_i\cup B_{i-1})$.
By (iv), 
these two intersections may contain a total of at most $t-|X_i|$ vertices.
In order to exploit the given values 
${\rm dyn}_{i-1}(X_{i-1},\prec_{i-1},\rho_{i-1})$,
we decouple $\partial G_i$ from $G_i-B_i$,
which leads us to consider 
all candidates for an extension $\prec_{(i-1,i)}$ of $\prec_i$ 
to $B_{i-1}\cup B_i$
specifying a possible order 
in which the vertices in $B_{i-1}\cup B_i$
appear in a cascade.
Fixing the triple $\Big(X''_{i-1},\partial Y_i,\prec_{(i-1,i)}\Big)$, 
we specify that $Y_i\cup X_i$ intersects 
$B_{i-1}$ in the set $X_{i-1}:=X_{i-1}'\cup X_{i-1}''$, and
that $\prec_{(i-1,i)}$ contains a linear order $\prec_{i-1}$ on $B_{i-1}$,
which means that we can emulate the formation of the hull
within $G_i$ just by working within $\partial G_i$. 
We fix $\partial Y_i$ in order to determine 
the right choice for $\rho_{i-1}$.

Formally, let ${\cal Y}$ be the set of all triples $\Big(X_{i-1}'',\partial Y_i,\prec_{(i-1,i)}\Big)$, where
\begin{itemize}
\item $X_{i-1}''$ is a subset of $B_{i-1}''$, 
\item $\partial Y_i$ is a subset of $\partial V_i\setminus (B_i\cup B_{i-1})$, 
\item $\Big|X_{i-1}''\cup \partial Y_i\Big|\leq t-|X_i|$, and
\item $\prec_{(i-1,i)}$ is a linear extension of $\prec_i$ to $B_{i-1}\cup B_i$ 
such that $u\prec_{(i-1,i)} v$ 
for every $u\in X_i\cup X_{i-1}''$ 
and every $v\in \Big(B_{i-1}\cup B_i\Big)\setminus \Big(X_i\cup X_{i-1}''\Big)$.
\end{itemize}
Note that ${\cal Y}$ contains $O\Big(2^{t-1}n^t(2t-2)!\Big)$ elements.

We now explain how to choose $\rho_{i-1}$ given an element of ${\cal Y}$.

Let $\Big(X_{i-1}'',\partial Y_i,\prec_{(i-1,i)}\Big)$ be an element of ${\cal Y}$.

Let $v_1\prec_{(i-1,i)} \ldots \prec_{(i-1,i)} v_p$ be the linear order $\prec_{(i-1,i)}$ on $B_{i-1}\cup B_i$.

For every $j$ in $[p]$ with $v_j\in (B_i\cup B_{i-1})\setminus (X_i\cup X_{i-1}'')$,
let $h_j$ be the number of neighbors of $v_j$ in the hull of the set 
$$\Big\{ v_\ell:\ell\in [j-1]\Big\}\cup X_i\cup X_{i-1}''\cup \partial Y_i$$
in 
$\Big(\partial G_i-\Big\{ v_\ell:\ell\in [p]\setminus [j-1]\Big\},\tau\Big).$

If 
$B_i\cup B_{i-1}\cup \partial Y_i$ is not a dynamic monopoly of $(\partial G_i,\tau)$
or
if $h_j<\tau(v_j)-\rho_i(v_j)$
for some $j$ in $[p]$ with $v_j\in B_i\setminus (X_i\cup B_{i-1})$,
then let $f\Big(X_{i-1}'',\partial Y_i,\prec_{(i-1,i)}\Big)=\infty$.
Note that these two cases correspond 
to violations of the conditions (b$'$) and (c$'$)
in the proof of Claim \ref{claim4},
that is, in these cases
there is no set $Y_i$ as in the definition of 
${\rm dyn}_i(X_i,\prec_i,\rho_i)$, 
and, consequently, ${\rm dyn}_i(X_i,\prec_i,\rho_i)=\infty$.

Now, we may assume that 
$B_i\cup B_{i-1}\cup \partial Y_i$ is a dynamic monopoly of $(\partial G_i,\tau)$
and that 
$h_j\geq \tau(v_j)-\rho_i(v_j)$
for every $j$ in $[p]$ with $v_j\in B_i\setminus (X_i\cup B_{i-1})$.
In this case, 
let $f\Big(X_{i-1}'',\partial Y_i,\prec_{(i-1,i)}\Big)$ equal
$$|\partial Y_i|+|X_{i-1}''|+{\rm dyn}_{i-1}\Big(\Big(X_{i-1}'\cup X_{i-1}''\Big),\prec_{i-1},\rho_{i-1}\Big),$$
where 
\begin{itemize}
\item $\prec_{i-1}$ is the restriction of $\prec_{(i-1,i)}$ to $B_{i-1}$,
\item $\rho_{i-1}(v_j)=\rho_i(v_j)+h_j$ for every $j$ in $[p]$ 
with $v_j\in B_{i-1}'\setminus X_{i-1}'$, and
\item $\rho_{i-1}(v_j)=h_j$ for every $j$ in $[p]$ 
with $v_j\in B_{i-1}''\setminus X_{i-1}''$.
\end{itemize}
Note that also in this case $f\Big(X_{i-1}'',\partial Y_i,\prec_{(i-1,i)}\Big)$ can be $\infty$.
Note furthermore that,
for every $v_j\in B_{i-1}'\setminus X_{i-1}'$,
the value of $\rho_{i-1}(v_j)$
has a contributing term $\rho_i(v_j)$ quantifying the help from outside of $V_i$
as well as a contributing term $h_j$ quantifying the help from outside of $V_{i-1}$ but from inside of $V_i$.
For every $v_j\in B_{i-1}''\setminus X_{i-1}''$,
there is no help from outside of $V_i$, that is, the first term disappears.
In view of the above explanation, 
it now follows easily that 
the best choice within ${\cal Y}$
yields ${\rm dyn}_i(X_i,\prec_i,\rho_i)$, that is,
\begin{eqnarray}
{\rm dyn}_i(X_i,\prec_i,\rho_i)&=&
\min\Big\{ f\Big(X_{i-1}'',\partial Y_i,\prec_{(i-1,i)}\Big):
\Big(X_{i-1}'',\partial Y_i,\prec_{(i-1,i)}\Big)\in {\cal Y}\Big\}.
\label{e1}
\end{eqnarray}
In fact, if $Y_i$ is as in the definition of 
${\rm dyn}_i(X_i,\prec_i,\rho_i)$,
and $\prec$ is as in (v) for that set, then 
\begin{eqnarray*}
{\rm dyn}_i(X_i,\prec_i,\rho_i) & = & |Y_i|\\
&=&|\partial Y_i|+|X_{i-1}''|+|Y_{i-1}|\\
& \geq &
|\partial Y_i|+|X_{i-1}''|+
{\rm dyn}_{i-1}\Big(\Big(X_{i-1}'\cup X_{i-1}''\Big),\prec_{i-1},\rho_{i-1}\Big)\\
&=&f\Big(X_{i-1}'',\partial Y_i,\prec_{(i-1,i)}\Big),
\end{eqnarray*}
where 
$\partial Y_i=Y_i\cap (\partial V_i\setminus B_i)$,
$X''_{i-1}=Y\cap B''_{i-1}$,
$Y_{i-1}=Y_i\cap (V_{i-1}\setminus B_{i-1})$, 
$X'_{i-1}=X_i\cap B'_{i-1}$, and
$\prec_{i-1}$ is the restriction of $\prec$ to $B_{i-1}$,
where the inequality follows because the set $Y_{i-1}$ 
satisfies the conditions in the definition of 
${\rm dyn}_{i-1}\Big(\Big(X_{i-1}'\cup X_{i-1}''\Big),\prec_{i-1},\rho_{i-1}\Big)$.

Conversely, if 
$\Big(X_{i-1}'',\partial Y_i,\prec_{(i-1,i)}\Big)$ is in ${\cal Y}$,
and the set $Y_{i-1}$ is as in the definition of 
${\rm dyn}_{i-1}\Big(\Big(X_{i-1}'\cup X_{i-1}''\Big),\prec_{i-1},\rho_{i-1}\Big)$,
then the set 
$Y_i=Y_{i-1}\cup X_{i-1}''\cup \partial Y_i$ 
satisfies the conditions in the definition of 
${\rm dyn}_i(X_i,\prec_i,\rho_i)$,
and, hence,
\begin{eqnarray*}
{\rm dyn}_i(X_i,\prec_i,\rho_i) & \leq & |Y_i|\\
&=&|\partial Y_i|+|X_{i-1}''|+|Y_{i-1}|\\
&=&
|\partial Y_i|+|X_{i-1}''|+
{\rm dyn}_{i-1}\Big(\Big(X_{i-1}'\cup X_{i-1}''\Big),\prec_{i-1},\rho_{i-1}\Big)\\
&=& f\Big(X_{i-1}'',\partial Y_i,\prec_{(i-1,i)}\Big),
\end{eqnarray*}
which shows (\ref{e1}).

Since ${\cal Y}$ has polynomially many elements, 
and each $f\Big(X_{i-1}'',\partial Y_i,\prec_{(i-1,i)}\Big)$ can be determined in polynomial time,
the claim follows.
\end{proof}
Since $k\leq n$,
and there are only polynomially many local cascades 
for each $G_i$,
the Claims \ref{claim3}, \ref{claim4}, and \ref{claim5} complete the proof.
\end{proof}
The algorithm described in the proof of Theorem \ref{theorem2}
can easily be modified in such a way that it also 
determines a minimum dynamic monopoly of $(G,\tau)$ 
within the same time bound.
While many ideas used in this proof extend to chordal graphs,
the number of choices for the linear orders $\prec$ 
seems to be a problem 
for the extension of Theorem \ref{theorem2} to chordal graphs.

We proceed to the proof of our second result.

\begin{proof}[Proof of Theorem \ref{theorem1}]
Since the hull of a set in $(G,\tau)$ can be determined in polynomial time,
the considered problem is in NP.
In order to show hardness, we describe a reduction from the NP-complete problem 
{\sc Vertex Cover} restricted to cubic graphs. Therefore, let $G$ be a cubic graph of order $n$.
Let $G'$ arise from the complete graph $K$ with vertex set $V(G)$
by adding, for every edge $uv$ of $G$, 
a clique $K(uv)$ of order $n$ as well as 
all $2n$ possible edges between $K(uv)$ and $\{ u,v\}$.
Let 
$$\tau:V(G')\to\mathbb{N}_0:
u\mapsto
\begin{cases}
3n+3&\mbox{, if $u\in V(G)$, and}\\
1 &\mbox{, otherwise.}
\end{cases}
$$
In order to complete the proof, 
it suffices to show that the vertex cover number of $G$
equals ${\rm dyn}(G',\tau)$.

First, suppose that $X$ is a vertex cover of $G$.
Let $H$ be the hull of $X$ in $(G',\tau)$.
Since every vertex in $V(G')\setminus V(G)$ has a neighbor in $X$
and threshold value $1$, the set $H$ contains $V(G')\setminus V(G)$.
Therefore, for every vertex $u$ of $G'$ in $V(G)\setminus X$,
the set $H$ contains all three neighbors of $u$ in $V(G)$
as well as all $3n$ neighbors of $u$ in $V(G')\setminus V(G)$,
which implies that $X$ is a dynamic monopoly of $(G',\tau)$.

Next, suppose that $D$ is a dynamic monopoly of $(G',\tau)$.
Since replacing a vertex in $D\setminus V(G)$ by some neighbor in $V(G)$ 
yields a dynamic monopoly, we may assume that $D\subseteq V(G)$.
Suppose, for a contradiction, 
that $u_r,u_s\not\in D$ for some edge $u_ru_s$ in $G$,
where $u_1\prec\ldots\prec u_{n'}$ is a cascade for $D$, and $r<s$.
It follows that $\{ u_j:j\in [r-1]\}$ contains no vertex of $K(u_ru_s)$, 
which implies the contradiction
$|N_{G'}(u_r)\cap \{ u_j:j\in [r-1]\}|\leq 2+2n$.
Hence, $D$ is a vertex cover of $G$,
which completes the proof.
\end{proof}


\begin{thebibliography}{99}
\bibitem{barasasz} R. Barbosa, D. Rautenbach, V. Fernandes dos Santos, J.L. Szwarcfiter, On minimal and minimum hull sets, Electronic Notes in Discrete Mathematics 44 (2013) 207-212.
\bibitem{behelone} O. Ben-Zwi, D. Hermelin, D. Lokshtanov, I. Newman, Treewidth governs the complexity of target set selection, Discrete Optimization 8 (2011) 87-96.
\bibitem{bolu} K.S. Booth, G.S. Lueker, Testing for the consecutive ones property, interval graphs, and graph planarity using PQ-tree algorithms, Journal of Computer and System Sciences 13 (1976) 335-379.
\bibitem{cedoperasz} C.C. Centeno, M.C. Dourado, L.D. Penso, D. Rautenbach, J.L. Szwarcfiter, Irreversible conversion of graphs, Theoretical Computer Science 412 (2011) 3693-3700.
\bibitem{chhuliwuye} C.-Y. Chiang, L.-H. Huang, B.-J. Li. J. Wu, H.-G. Yeh, Some results on the target set selection problem, Journal of Combinatorial Optimization 25 (2013) 702-715.
\bibitem{chniniwe} M. Chopin, A. Nichterlein, R. Niedermeier, M. Weller, Constant thresholds can make target set selection tractable, Theory of Computing Systems 55 (2014) 61-83.
%\bibitem{chly} C.-L. Chang, Y.-D. Lyuu, Triggering cascades on strongly connected directed graphs, Theoretical Computer Science 593 (2015) 62-69.
\bibitem{ch} N. Chen, On the approximability of influence in social networks, SIAM Journal on Discrete Mathematics 23 (2009) 1400-1415.
\bibitem{dori} P. Domingos, M. Richardson, Mining the network value of customers, Proceedings of the 7th ACM SIGKDD International Conference on Knowledge Discovery and Data Mining, (2001) 57-66.\bibitem{drro} P.A. Dreyer Jr., F.S. Roberts, Irreversible k-threshold processes: Graph-theoretical threshold models of the spread of disease and of opinion, Discrete Applied Mathematics 157 (2009) 1615-1627.
%\bibitem{gera} M. Gentner, D. Rautenbach, Dynamic monopolies for degree proportional thresholds in connected graphs of girth at least five and trees, Theoretical Computer Science 667 (2017) 93-100. 
\bibitem{keklta} D. Kempe, J. Kleinberg, E. Tardos, Maximizing the spread of influence through a social network, Theory of Computing 11 (2015) 105-147.
\bibitem{kylivy} J. Kyn\v{c}l, B. Lidick\'{y}, T. Vysko\v{c}il, Irreversible 2-conversion set in graphs of bounded degree, Discrete Mathematics and Theoretical Computer Science 19 (2017) \# 3.
\end{thebibliography}
\end{document}